\newtheorem{notation}[theorem]{Notation}
\theoremstyle{remark}
\newcommand{\tw}{\ell}
\renewcommand{\P}{\mathcal{P}}
\newcommand{\ev}{\mathrm{ev}}
\newcommand{\evav}{\mathrm{ev}_{\bm{\a},\bm{v}}}
\newcommand{\F}{\mathbb{F} }
\newcommand{\Z}{\mathbb{Z} }
\newcommand{\RS}{\mathbf{RS}}
\newcommand{\GRS}{\mathbf{GRS}}
\renewcommand{\c}{\mathcal{C}}
\newcommand{\CC}{\mathcal{C}}
\newcommand{\R}{\mathcal{R}}
\newcommand{\EE}{\mathcal{E}}
\newcommand{\Span}[1]{\left\langle #1 \right\rangle}
\newcommand{\citep}{\cite}
\newcommand{\Fq}{\F_q}
\newcommand{\Fqx}{\Fq[x]}
\newcommand{\Fqn}{\F_q^n}
\newcommand{\Fqxk}{\Fq[x]_{<k}}
\newcommand{\<}{\left<}
\renewcommand{\>}{\right>}
\newcommand{\C}{\mathcal{C}}
\renewcommand{\r}{\rho}
\newcommand{\G}{\mathbf{G}}
\newcommand{\D}{\mathcal{D}}
\renewcommand{\a}{\alpha}
\newcommand{\aav}{\bm{\a}}
\newcommand{\A}{\mathcal{A}}
\newcommand{\Gpub}{\mathbf{G}_{pub}}
\newcommand{\MM}{\mathcal{M}}
\newcommand{\Mon}[1]{\Fqx_{<#1}}
\newcommand{\Monh}[2]{\text{Mon}_{<#1,\hat{#2}}}
\newcommand{\eqdef}{\stackrel{\textrm{def}}{=}}
\newcommand{\B}{\mathcal{B}}
\renewcommand{\aa}{\boldsymbol{a}}
\newcommand{\bb}{\boldsymbol{b}}
\renewcommand{\le}{\leqslant}
\renewcommand{\ge}{\geqslant}
\newcommand{\vv}{\mathcal{\textbf{v}}}
\newcommand{\E}{\EE'}
\newcommand{\schur}{\star}
\newcommand{\vect}[1]{\boldsymbol{#1}}
\newcommand{\bv}{\vect{b}}
\newcommand{\cv}{\vect{c}}
\renewcommand{\vv}{\vect{v}}
\newcommand{\xv}{\vect{x}}
\newcommand{\yv}{\vect{y}}
\newcommand{\map}[4]{
\left\{
    \begin{array}{ccc}
        #1 & \longrightarrow & #2  \\
        #3 & \longmapsto     & #4
    \end{array}
\right.
}
\renewcommand{\geq}{\geqslant}
\renewcommand{\leq}{\leqslant}
\title
{On the structure of the Schur squares of Twisted Generalized Reed-Solomon codes and application to cryptanalysis} 
	\author{
	    Alain Couvreur\inst{1,2} \and
	    Rakhi Pratihar\inst{1,2} \and
	    Nihan Tan{\i}sal{\i}\inst{1,2}\and
	    Ilaria Zappatore\inst{3}
	    }
\institute{
    Inria \and
    Laboratoire LIX, CNRS UMR 7161, École Polytechnique, Institut Polytechnique de Paris, 1 rue Honoré d'Estienne d'Orves, 91120 Palaiseau Cedex \and
   XLIM, CNRS UMR 7252, Université de Limoges, 123, avenue Albert Thomas - 87060 Limoges Cedex
   \email{\{alain.couvreur,rakhi.pratihar,nihan.tanisali\}@inria.fr}\\
   \email{ilaria.zappatore@unilim.fr}
}
\begin{document}
\maketitle
\begingroup
 \renewcommand{\thefootnote}{}%
\footnotetext{This work is partially funded by the French Agence Nationale de la Recherche project ANR-21-CE39-0009-BARRACUDA, by Plan France 2030 ANR-22-PETQ-0008 and by Horizon-Europe MSCA-DN project ENCODE.}
\endgroup
\addtocounter{footnote}{-1}
\begin{abstract}
Twisted generalized Reed-Solomon (TGRS) codes constitute an interesting family of evaluation codes, containing a large class of maximum distance separable codes non-equivalent to generalized Reed-Solomon (GRS) ones.
Moreover, the Schur squares of TGRS codes may be much larger than those of GRS codes with same dimension.
Exploiting these structural differences, in 2018, Beelen, Bossert, Puchinger and Rosenkilde proposed a subfamily of Maximum Distance Separable (MDS) Twisted Reed--Solomon (TRS) codes over $\Fq$ with $\ell$ twists $q \approx n^{2^{\ell}}$ for McEliece encryption, claiming their resistance to both Sidelnikov Shestakov attack and
Schur products--based attacks. In short, they claimed these codes to resist to classical key recovery attacks on McEliece encryption scheme instantiated with Reed-Solomon (RS) or GRS codes. In 2020, Lavauzelle and Renner presented an original attack on this system based on the computation of the subfield subcode of the public TRS code.

In this paper, we show that the original claim on the resistance of TRS and TGRS codes to Schur products based--attacks is wrong.
We identify a broad class of codes including TRS and TGRS ones that is distinguishable from random by computing the Schur square
of some shortening of the code. Then, we focus on the case of single twist (\emph{i.e.}, $\ell = 1$), which is the most efficient one in terms of decryption complexity, to derive an attack. The technique is similar to the distinguisher-based attacks of RS code-based systems given by Couvreur, Gaborit, Gauthier-Umaña, Otmani, Tillich in 2014. 
\end{abstract} 

\textbf{Keywords:} Twisted generalised Reed-Solomon codes, Schur products, Code-based Cryptography, McEliece encryption scheme, Cryptanalysis,

\section{Introduction}

\emph{McEliece's encryption scheme} dates back to the early ages of public key cryptography. For a long time it has been considered unusable because of the
significant size of the public key. However, with the recent and growing interest
of post-quantum cryptographic primitives, Classic McEliece \cite{ABCCGLMMMNPPPSSSTW20}
could be standardized in the near future. Besides the seminal proposal by McEliece
himself based on classical Goppa codes, there have been many attempts to replace Goppa codes by other families of codes with more efficient decoding algorithms in order to reduce the key size. In 1986, Niederreiter \cite{Nie86} suggested generalized Reed-Solomon codes (GRS) to replace Goppa codes, but it was shown to be insecure by Sidelnikov and Shestakov in \cite{SS92}. Thereafter, several instantiations using codes
``close to GRS" codes appeared. Berger and Loidreau replaced the GRS code by a subcode of low codimension \cite{BL05}; Wieschebrink \cite{Wie06} included some random columns in a generator matrix of a GRS code; this approach was further enhanced in \cite{Wan16,Wan17} by additionally ``mixing'' the random columns with the original ones via specific linear transformations; finally, in \cite{BBC+} Baldi, Bianchi, Chiaraluce, Rosenthal, and Schipani proposed to mask the structure of a GRS code by
right multiplying it by a ``partially weight-preserving'' matrix. All these proposals have been partially or fully broken using attacks derived from a square code distinguisher \cite{CGGOT,CLT19,COTG15,Wie10}. 

\emph{Twisted Reed-Solomon codes} (TRS) are evaluation codes in the Hamming metric. These codes were introduced in \cite{Beelen}, adapting to the Hamming setting Sheekey's construction of \emph{twisted Gabidulin codes} \cite{Sheekey} in rank metric.
Unlike Reed-Solomon codes, TRS codes are not always maximum distance separable (MDS) codes, but this family of codes contains MDS codes that are not equivalent to generalized Reed-Solomon (GRS) codes. 
Recently, in \cite{BBPR}, TRS codes have been proposed as an alternative to Goppa codes for McEliece cryptosystem, and example parameters are given that provide shorter keys compared to the original McEliece cryptosystem for the same security level. The authors also singled out a subfamily of twisted Reed-Solomon codes, which they ``provably'' claimed to be resistant against several known structural attacks on the McEliece
cryptosystem based on RS-like codes: Sidenlikov–Shestakov \cite{SS92}, Wieschebrink \cite{Wie,Wie10}, Schur square-distinguishing \cite{CGGOT}.

More recently, Lavauzelle and Renner \cite{LR} gave an efficient key-recovery attack on the TRS variant proposed in \cite{BBPR} based on identifying some specific structure of the \emph{subfield subcode}. Lavauzelle and Renner assumed the security claims of \cite{BBPR} on the resistance of TRS codes to Schur squares to be true and identified another weakness that was very specific to the chosen public keys (\textit{i.e.} the underlying TRS code is MDS) coming from some tower of extensions of finite fields. It is worth mentioning that their attack restricts to a
limited subfamily of TRS codes.

In this paper, \emph{our contributions} are threefold.
First, we show the claims of \cite{BBPR} on the resistance of TRS codes to Schur square distinguishing are wrong. This holds even for their generalized version : TGRS codes. 
Hence we are able to prove that such codes are distinguishable from random as soon as their
number $\ell$ of twists is in $O(1)$. The latter assumption is reasonable since the
decoding of such codes is exponential in $\ell$. In short, we prove that any TGRS codes
that could be proposed for the McEliece scheme are actually distinguishable from random.

Second, for the case of TGRS codes with a single twist ($\ell = 1$) we show how to derive
a polynomial time attack from the Schur square distinguisher. This attack is in the very same flavour as
the attack of \cite{CGGOT,COTG15} on BBCRS scheme \cite{BBC+} and runs in $O(q^3 n^4)$
operations in $\Fq$. Note that the family of
codes for which such an attack applies is much larger than the family broken by \cite{LR},
which was restricted to a restricted family of TRS codes.

Third, the attacks in \cite{CGGOT,COTG15} involved a heuristic argument that was claimed to hold ``with a high probability". In the present article, we provide a detailed analysis of the success probability of the algorithm, providing a proven attack of the scheme.
{Note that we are able to estimate this success probability for a range of parameters that is \textbf{strictly included} in the range of parameters we can actually attack. In short : this attack works on a broad range of parameters and we can prove its success without involving any heuristic arguments in some subrange of parameters.
}

\subsection*{Outline of the article}
Section~\ref{sec:prelim} provides the basic notation and the necessary prerequisites on
GRS codes, TGRS codes, Schur products, and McEliece encryption scheme.
In Section~\ref{sec:qGRS}, we introduce a new class of codes called \emph{quasi--GRS codes} which strictly includes
TGRS codes, which turns out to be the class of codes we succeed distinguishing from random
by computing the Schur square of some of their shortening.
In Section~\ref{sec:distinguisher}, we show how to distinguish quasi--GRS codes from random.
Section~\ref{sec:5} is dedicated to the presentation of an attack on the McEliece
instantiated with TGRS codes with a single twist and a \emph{SageMath} implementation of the  attack together with timing results are given in Section~\ref{sec:implem}. Finally, Section~\ref{pol}
is dedicated to a probability analysis yielding the proof of a crucial theorem 
for the attack.



\section{Preliminaries}\label{sec:prelim}

Let $q$ be a prime power, $\Fq$ be a finite field of order $q$, and $\Fqn$ the $\Fq$ vector space of dimension $n$. In this article, vectors are represented by lowercase bold letters: $\boldsymbol{a}, \boldsymbol{b}, \boldsymbol{c}$ and matrices by uppercase bold letters $\G, \mathbf{H}$.
Given a positive integer $n$, we denote $[n]$ to be the set $[n]\eqdef\{1,\ldots,n\}$.
We denote by $\Fqxk$ the space of polynomials over $\Fq$ of degree strictly less than $k$.
Finally, given elements $a_1,\dots, a_s$ of a given vector space $V$, we denote by $\Span{a_1,\dots, a_s}$
the vector space spanned by these elements. Note that all the considered vector spaces in this paper
are over $\Fq$, hence the field is not specified when mentioning dimension or vector span.




Sometimes, since we work with \textit{Twisted Generalized Reed Solomon codes}, it would be useful to remove one monomial $x^h$ (for a certain $0\leq h \leq k-1$) from the previously defined subspace $\Fqxk$. In that case, we denote:
\begin{equation}
\Monh{k}{h}\eqdef \Span{x^0, \ldots, \widehat{x^h}, \ldots, x^{k-1}},
\end{equation}
where the hat notation means that the monomial $x^h$ is removed.
In a more general situation, when we remove more than one monomial, for instance $x^{h_1}, \ldots, x^{h_{\ell}}$ for $\ell \geqslant 1$, we denote,
\begin{equation}\label{eq:monCod1Space}
\Monh{k}{\boldsymbol{h}}\eqdef \Span{x^i : i \in [n]\setminus \{h_1, \ldots, h_{\ell}\}}
\quad \text{where}\quad \boldsymbol{h}\eqdef(h_1, \ldots, h_{\ell}).
\end{equation}
Finally, the present article crucially uses the notion of 
\emph{shortening} of codes:

\begin{definition}
    Given a code $\C \subseteq \Fq^n$ and a subset $I = \{i_1, \ldots, i_{|I|}\} \subseteq [n]$, the \emph{shortening}  of $\C$ at $I$, denoted as $\C_I$, is defined as
    \[
       \C_I \eqdef \{(x_i)_{i \in [n] \setminus I} ~:~ \xv =(x_1, \ldots, x_n) \in \C  ~\text{such\ that}~ \forall i \in I,~ x_i = 0\}.
    \]
\end{definition}

\begin{remark}
In short, the shortened  code is the subcode of vectors whose entries indexed by $I$ are zero and whose prescribed zero entries are removed. Then at some places in the paper -- we will mention it when needed -- we will not remove the prescribed zero entries from the shortening and hence take as a definition:
\[
    \{\xv \in \C ~:~ \forall i \in I,\ x_i = 0\}.
\]
\end{remark}

\subsection{Schur products of linear spaces and evaluation codes}

One of the most important tools in the cryptanalysis of the McEliece encryption scheme and its variants is the Schur product of codes. For this reason, we introduce the following definitions.

\begin{definition}[Componentwise product] Given $\aa = (a_1, \ldots, a_n)$ and $\bb=(b_1, \ldots, b_n)$ in $\Fqn$, we denote by $\aa\schur\bb$ the \emph{componentwise} or \emph{Schur} product as
$$
a\schur b \eqdef (a_1b_1, \ldots, a_n b_n).
$$
\end{definition}

\begin{definition}[Schur product of linear codes and square code]\label{def:schurCode} The
\emph{Schur product} of two linear codes $\A,\ \B \subset \Fq^n$ is defined as,
$$
\A \schur \B \eqdef \langle \{ \aa \schur \bb : \aa \in \A, \bb \in \B\}\rangle
$$
When $\A=\B$ then $\A \schur \A$ is called \emph{square} of $\A$ and is denoted by $\A^2$.
\end{definition}
We can easily derive an upper bound on the dimension of the Schur product of two codes and of the square product code : if we consider a basis of $\A$ and $\B$, the product space is generated by the componentwise products of their elements.

\begin{proposition}\label{prop:randomsquare}
    If $\A$ and $\B$ two linear codes of length $n$. Then,
    \begin{enumerate}
        \item $\dim(\A \schur \B)\leq \dim(\A)\dim(\B)$,
        \item $\dim(\A^2) \leq \binom{\dim(\A)+1}{2}$
    \end{enumerate}
\end{proposition}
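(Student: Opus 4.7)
The plan is to exploit the bilinearity and symmetry of the Schur product, so that once a basis of each factor is fixed, the Schur product code is spanned by componentwise products of basis vectors, and the dimension bounds follow by counting generators.

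For part (1), I would fix a basis $\{\aa_1, \ldots, \aa_s\}$ of $\A$ with $s = \dim(\A)$ and a basis $\{\bb_1, \ldots, \bb_t\}$ of $\B$ with $t = \dim(\B)$. Any $\aa \in \A$ and $\bb \in \B$ decompose as $\aa = \sum_i \alpha_i \aa_i$ and $\bb = \sum_j \beta_j \bb_j$. The Schur product $\schur$ is bilinear (this is immediate from its componentwise definition), so
\[
\aa \schur \bb \;=\; \sum_{i \in [s]} \sum_{j \in [t]} \alpha_i \beta_j \, (\aa_i \schur \bb_j).
\]
Since $\A \schur \B$ is by definition spanned by all such $\aa \schur \bb$, the family $\{\aa_i \schur \bb_j\}_{i \in [s],\, j \in [t]}$ is a spanning set of cardinality $st$, hence $\dim(\A \schur \B) \leq \dim(\A)\dim(\B)$.

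For part (2), I would specialize to $\A = \B$ and use the additional observation that the Schur product is commutative: $\aa_i \schur \aa_j = \aa_j \schur \aa_i$. Thus among the $s^2$ generators $\{\aa_i \schur \aa_j\}_{i,j \in [s]}$ of $\A^2$ provided by part (1), each unordered pair $\{i,j\}$ is represented twice when $i \neq j$ and once when $i = j$. Restricting to the sub-family indexed by pairs with $i \leq j$ still gives a spanning set of $\A^2$, and its cardinality is $\binom{s}{2} + s = \binom{s+1}{2}$, yielding $\dim(\A^2) \leq \binom{\dim(\A)+1}{2}$.

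There is no real obstacle here: both bounds are a direct counting of spanning sets obtained from bases, relying only on bilinearity (for (1)) and on the added commutativity of $\schur$ (for (2)). The only thing worth being careful about is to state explicitly that the spanning set one extracts is indeed spanning, which is immediate from the definition of $\A \schur \B$ as the span of all componentwise products of elements of $\A$ and $\B$.
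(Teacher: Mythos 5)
Your proof is correct and follows essentially the same route the paper sketches: fix bases, use bilinearity of $\schur$ to see that the products of basis vectors span $\A \schur \B$, and use commutativity to cut the generating set down to $\binom{s+1}{2}$ elements in the square case. The paper only states this argument in one informal sentence, so your write-up is simply a fuller version of the same idea.
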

In particular, for a random code of dimension $k$ and length $n$, it can be shown that the dimension of the square is  $\min\left\{\frac{k(k+1)}{2},n\right\}$ with high probability. Such a statement is proved in \cite[Thm.~2.3]{CCMZ15} for binary codes
and in \cite[Thm.~2]{FGVOPT13} for $q$-ary ones.
\begin{proposition}\label{prop:randomsquaredim}
    Let $k,n\geq 0$ such that ${k+1\choose 2} < n$ and $\A$ be a random $[n,k]$ code. We have
    $$
    \text{Prob}\left[ \dim \A^2 < \binom{k+1}{2}\right]=o(1).
    $$
\end{proposition}


\subsection{Generalized Reed-Solomon codes and the square code construction }

Given a vector $\bm{\a} = (\a_1, \ldots, \a_n) \in \Fq^n$ for distinct $\a_i$'s and $\bm{v} = (v_1, \ldots, v_n)$ a nonzero vector in $\Fqn$, we consider the following \textit{evaluation map} associated to $\bm{\a}$ and $\bm{v}$,
\begin{equation}\label{eval}
\mathrm{ev}_{\bm{\a},\bm{v}} \colon \map{\Fq[x]}{\Fq^n}{f(x)}{(v_1 f(\a_1), \ldots, v_n f(\a_n)).}
\end{equation}
\begin{definition}[Generalized Reed-Solomon Code]
    Let $k, n$ be positive integers, $k < n\leq q$. The $[n,k]_q$ \emph{generalized Reed-Solomon (GRS)} code associated with $\bm{\a}, \bm{v}$ is defined as
     \begin{align*}
        \GRS_k(\bm \a, \bm{v})\eqdef \{( \mathrm{ev}_{\bm{\a}, \bm{v}}(f) ~\colon~ f \in \Fqxk)\}.
     \end{align*}
    If $\bm{v} = (1, \dots, 1)$, then the code is a \emph{Reed--Solomon code} denoted as $\RS_k(\mathbf{\bm{\alpha}})$.
\end{definition}

GRS codes lie at the core of algebraic coding theory. Many other algebraic constructions of codes, such as Alternant codes, BCH codes, Goppa codes, Srivastava codes, and so on, are derived from GRS ones. GRS codes are famous because of their optimal parameters: they are known to be \emph{Maximum distance Separable} (MDS), \emph{i.e.} they have the best possible minimum distance with respect to their length and dimension. In addition, such codes benefit from efficient decoding algorithms up to half their minimum distance \cite[Chapter~6]{Roth} and even beyond using list decoding \cite{GS98,Sudan97}.

\medskip
The Schur product (see Definition~\ref{def:schurCode}) plays a central role in the cryptanalysis of the McEliece cryptosystem and its variants. It indeed permits to distinguish algebraically \textit{structured} codes such as GRS ones from random codes.
\begin{proposition}\label{prop:sq_GRS}
    $\GRS_k(\bm \a, \bm{v}) \star \GRS_{\ell}(\aav, \vv) = \GRS_{k+\ell-1}(\bm \a, \bm{v}\star \vv)$.
\end{proposition}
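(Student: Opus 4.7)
The plan is to prove both inclusions by working at the level of the polynomial ring and then pushing through the evaluation map.

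First I would establish the inclusion $\GRS_k(\bm{\a}, \bm{v}) \star \GRS_{\ell}(\bm{\a}, \bm{v}') \subseteq \GRS_{k+\ell-1}(\bm{\a}, \bm{v} \star \bm{v}')$. By Definition~\ref{def:schurCode}, it suffices to check this for pure Schur products of codewords. Pick $\cv = \mathrm{ev}_{\bm{\a},\bm{v}}(f) \in \GRS_k(\bm{\a}, \bm{v})$ with $\deg f < k$ and $\cv' = \mathrm{ev}_{\bm{\a},\bm{v}'}(g) \in \GRS_{\ell}(\bm{\a}, \bm{v}')$ with $\deg g < \ell$. Componentwise,
\begin{equation*}
(\cv \star \cv')_i = v_i v_i' f(\a_i) g(\a_i) = (v_i v_i')(fg)(\a_i),
\end{equation*}
so $\cv \star \cv' = \mathrm{ev}_{\bm{\a}, \bm{v}\star\bm{v}'}(fg)$, and since $\deg(fg) < k+\ell-1$ the product lies in $\GRS_{k+\ell-1}(\bm{\a}, \bm{v}\star\bm{v}')$. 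Taking the span yields the desired inclusion.

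For the reverse inclusion, I would argue that every monomial basis vector of $\Fq[x]_{<k+\ell-1}$ factors as a product of one polynomial of degree less than $k$ and one of degree less than $\ell$. Concretely, for $0 \le j \le k+\ell-2$, write $j = a+b$ with $a = \min(j, k-1)$ and $b = j-a$; then $a \le k-1$ and $b \le \ell-1$, so $x^j = x^a \cdot x^b$ exhibits the factorization. Applying $\mathrm{ev}_{\bm{\a}, \bm{v}\star\bm{v}'}$ gives
\begin{equation*}
\mathrm{ev}_{\bm{\a},\bm{v}\star\bm{v}'}(x^j) = \mathrm{ev}_{\bm{\a},\bm{v}}(x^a) \star \mathrm{ev}_{\bm{\a},\bm{v}'}(x^b) \in \GRS_k(\bm{\a},\bm{v}) \star \GRS_{\ell}(\bm{\a}, \bm{v}').
\end{equation*}
Since the codewords $\mathrm{ev}_{\bm{\a},\bm{v}\star\bm{v}'}(x^j)$ for $0 \le j \le k+\ell-2$ generate $\GRS_{k+\ell-1}(\bm{\a}, \bm{v}\star\bm{v}')$ (under the implicit hypothesis $k+\ell-1 \le n$ that makes this GRS code well-defined), the reverse inclusion follows.

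There is essentially no hard step: the only mild subtlety is the monomial factorization bookkeeping in the second inclusion and the implicit assumption $k+\ell-1 \le n$, which guarantees that the evaluation map on $\Fq[x]_{<k+\ell-1}$ remains injective and that the target code is a genuine GRS code rather than a degenerate one.
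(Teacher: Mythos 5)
Your proof is correct; the paper itself gives no argument and simply defers to the cited reference \cite{CGGOT}, where essentially this same standard two-inclusion argument (pure products for $\subseteq$, monomial factorization $x^j = x^a x^b$ with $a\le k-1$, $b\le \ell-1$ for $\supseteq$) appears. Your remarks on the implicit hypothesis $k+\ell-1\le n$ and on allowing two distinct multiplier vectors are both sound refinements of the statement as written.
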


\begin{proof}
See \cite{CGGOT}.
\end{proof}

Given a GRS code of dimension $k$, Proposition~\ref{prop:sq_GRS} entails that the dimension of its square is $2k-1$ while Proposition~\ref{prop:randomsquaredim} asserts that it would be $\min\{n, \frac{k(k+1)}{2}\}$ for a random code.
Thus, GRS codes of dimension $k < \frac n 2$ can be easily distinguished from random ones by computing their Schur square.
For GRS codes of dimension $\geq \frac n 2$, they can be distinguished from random by computing the square of their duals which is itself
a GRS code.

Sometimes, the sole use of the Schur product is insufficient to provide a distinguisher, it is then useful to consider the square of a shortening of the the code. Shortening then squaring turns out to be very efficient to distinguish codes ``close" to GRS codes, as shown in \cite{CGGOT,CL22,CLT19,CMP17,COT14a,COT17,COTG15}. For this reason, we introduce the following Lemma that will be useful later.
\begin{lemma}
    Shortening of an $[n,k]$ GRS code at $a \le k$ positions gives an $[n- a, k - a]$ GRS code.
\end{lemma}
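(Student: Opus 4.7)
The plan is to use the polynomial description of GRS codes. Write $\C = \GRS_k(\bm{\alpha}, \bm{v})$, shorten at a set $I \subseteq [n]$ with $|I| = a \leq k$, and work directly on the evaluation map $\evav$ of \eqref{eval}.

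First, I would observe that a codeword $\evav(f) \in \C$ vanishes on $I$ if and only if $v_i f(\alpha_i) = 0$ for each $i \in I$. Since the $v_i$'s of a GRS code are nonzero, this is equivalent to $f(\alpha_i) = 0$ for $i \in I$, i.e., $f$ is divisible by the degree-$a$ polynomial $p(x) \eqdef \prod_{i \in I}(x - \alpha_i)$. Thus the set of polynomials giving rise to shortened codewords is exactly $p(x) \cdot \Fq[x]_{<k-a}$, and every such $f = p \cdot g$ with $\deg g < k-a$ yields a valid shortened codeword.

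Next, for $i \in [n] \setminus I$, compute
\[
v_i f(\alpha_i) = v_i \, p(\alpha_i) \, g(\alpha_i) = v_i' \, g(\alpha_i),
\]
where $v_i' \eqdef v_i p(\alpha_i)$. Since the $\alpha_j$ are distinct and $i \notin I$, we have $p(\alpha_i) \neq 0$, so $v_i' \neq 0$. Letting $\bm{\alpha}'$ and $\bm{v}'$ denote the restrictions to $[n] \setminus I$, the shortening $\C_I$ is precisely the image of $\Fq[x]_{<k-a}$ under $\mathrm{ev}_{\bm{\alpha}', \bm{v}'}$, i.e., $\GRS_{k-a}(\bm{\alpha}', \bm{v}')$. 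This has length $n - a$ and dimension $k - a$, as claimed.

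No step really qualifies as a serious obstacle here: the argument is essentially the observation that shortening a GRS code at $a$ evaluation points corresponds to multiplying by the vanishing polynomial of those points, which just shifts the degree bound by $a$ and twists the column multipliers by the nonzero values $p(\alpha_i)$. The only small subtlety to mention explicitly is that $\evav$ is injective on $\Fq[x]_{<k}$ (since $k \leq n$ and the $\alpha_i$ are distinct), so the dimension computation is immediate.
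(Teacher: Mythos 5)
Your proof is correct, and it is exactly the argument the paper has in mind: the lemma is stated without proof, but your reasoning is precisely the polynomial-ideal correspondence the paper sets up in \S\ref{sss:shorten_poly}, where the shortening of $\evav(\mathcal{P})$ at $I$ is identified with the evaluation of $\mathcal{P}\cap(p_I) = p_I\cdot\Fq[x]_{<k-a}$, with the nonzero values $p_I(\alpha_i)$ absorbed into the multipliers. Nothing is missing.
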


\subsection{Twisted generalized Reed--Solomon codes}
\emph{Twisted Reed--Solomon codes} (TRS) and \emph{Twisted Generalised Reed--Solomon} (TGRS) codes are slight variants of GRS codes. 
First inspired by a rank--metric counterpart \cite{Sheekey},
they have been first introduced in \cite{BPR17,Beelen} as a possible alternative to GRS codes containing some MDS codes.
Next, they have been proposed for cryptographic applications in \cite{BBPR}
with the argument that their structure was better hidden with respect to 
classical attacks such as Sidelnikov Shestakov attack or Schur square attack.
They are defined as follows.

\begin{definition}[Twisted generalized Reed-Solomon Codes]\label{def:TGRS}
    For positive integers $n,k, \tw $ with $\ell \le k \le n \le q$, suppose that $\mathbf{h}= (h_1, \ldots, h_\tw )  \in \{ 0,\ldots, k-1 \}^{\tw}$, $ \mathbf{t}= (t_1, \ldots, t_\tw )  \in \{ 1,\ldots, n-k \}^{\tw} $ and $ \boldsymbol{\eta}= (\eta_1, \ldots, \eta_\tw )  \in \F_q ^{\tw} $. Then 
     \begin{align}\label{twistedp}
         \mathcal{P}_{\mathbf{t},\mathbf{h},\bm{\eta}} ^{n,k} \eqdef \left\{ f=   \sum_{i=0 }^ {k-1}     f_i x^i +    \sum_{j=1}^ { \tw} \eta_j f_{h_j}  x^{k-1+t_j} ~:~ f_i \in \F_q  \right\}
     \end{align}
    is a $k$-dimensional $\Fq$-subspace of $\Fq[x]$. Furthermore, let $\bm{\a}= (\a_1,\ldots,\a_n) \in \Fq^n$ where $\a_i$, for $i =1, \ldots, n$ are distinct and $\bm{v} = (v_1, \ldots, v_n) \in (\Fq^{\times})^{n}$.
  Then the corresponding linear code is defined as
     \begin{align*}
         \mathcal{C}_{\bm{\a},\bm{v},\mathbf{t},\mathbf{h},\bm{\eta}}^{n,k} \eqdef \mathrm{ev}_{\bm{\a},\bm{v}} (\mathcal{P}_{\mathbf{t},\mathbf{h},\bm{\eta}} ^{n,k}) \subset \F_q^n.
     \end{align*} 
    is called $(\bm{\a},\bm{v},\mathbf{t},\mathbf{h},\bm{\eta})$-\emph{twisted generalized Reed-Solomon} (TGRS) code. For bre\-vity, we will simply use $\P$ and $\C $ to denote the space of twisted polynomials $\mathcal{P}_{\mathbf{t},\mathbf{h},\bm{\eta}}^{n,k}$ and the TGRS code  $\mathcal{C}_{\bm{\a}, \bm{v}, \mathbf{t},\mathbf{h},\bm{\eta}} ^{n,k}$. 
\end{definition}

The integer $\tw$ is referred to as the \emph{number of twists}, and we call the vectors $\bm{t}, \bm{h}$ and $\bm{\eta}$ the \textit{twist} vector, \textit{hook} vector, and \textit{coefficient} vector, respectively.

In \cite{BBPR}, the authors also proposed a \textit{brute force} technique to decode Twisted Reed--Solomon codes, which can be easily generalized to TGRS. Given a received word $\boldsymbol{y}=\boldsymbol{c}+\boldsymbol{e}\in \Fqn$, where $\boldsymbol{c}\in \C$, the main idea consists in guessing $\ell$ elements $g_1, \ldots, g_{\ell} \in \Fq$ and then decoding $\boldsymbol{y}-\ev_{\bm{\a}, \boldsymbol{v}}\sum_{j=1}^{\ell} g_j\eta_ix^{k-1+t_j}$ as a received word of a $\GRS_k(\bm{\alpha})$. This strategy succeeds if $g_j=f_{h_j}$ for all $1\leq j\leq \ell$. The complexity of such a decoder is $q^{\ell}$ times the complexity of a GRS decoder, and the decoding radius is the same as the GRS one. 

In \cite{Beelen}, the authors proposed a decoding strategy for TRS codes based on a \textit{key equation} which can eventually achieve a better complexity than the brute force. In particular, they introduced a \textit{partial unique decoder},  which means that there are some error patterns that cannot be corrected. They also provide heuristic estimations of the \textit{failure probability} of the decoder and of the decoding radius, which can be smaller than half of the minimum distance of the code. However, even if this algorithm achieves a better complexity than brute force, it is still exponential in the number of twists.

\subsection{Reasoning on the level of polynomial spaces}\label{ss:poly}
All the codes involved are derived from polynomial evaluations. Note that, given the evaluation map $\evav$, then by interpolation, any element $\cv \in \Fq^n$
can be seen as $\cv = \evav(f)$ for a unique $f \in \Fq[x]_{<n}$. Consequently, we will often reason either at the level of codewords, or at the level of polynomials. The correspondence is as follows: for any code $\C \subseteq \Fq^n$, there exists a unique subspace $\mathcal{P}_\C \subseteq \Fq[x]_{<n}$
such that 
\begin{equation}\label{eq:evC}
\C = \evav(\mathcal{P}_\C).
\end{equation}
This correspondence will allow us to move between codewords and polynomials as needed.
Similarly to codes, we define products of polynomial spaces as follows. Given, $\mathcal{P}, \mathcal{R} \subseteq \Fq[x]$ we define
\[
    \mathcal P \mathcal R \eqdef \Span{fg ~:~ f \in \mathcal P, g\in \mathcal R}.
\]
When $\mathcal P = \mathcal R$ we call this the \emph{square} of $\mathcal P$ and denote it $\mathcal{P}^2$. Next, given a polynomial $f \in \Fq[X]$ we use $f \mathcal P$ for $\Span{f}\mathcal P$.

As in the code setting (see Proposition~\ref{prop:randomsquare}), we have natural upper bounds.

\begin{proposition}\label{prop:dim_prod_space_poly}
    Let $\mathcal P$ and $\mathcal R$ be two subspaces of $\Fq[x]$,
    then \[
    \dim \mathcal P \mathcal R \leq \dim \mathcal P \cdot \dim \mathcal R \quad \text{and}\quad
    \dim \mathcal{P}^2 \leq {\dim \mathcal P + 1 \choose 2}.
    \]
\end{proposition}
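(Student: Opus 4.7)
The plan is to mimic the proof for codes (Proposition~\ref{prop:randomsquare}) at the polynomial level, exploiting bilinearity of the polynomial product exactly as one exploits bilinearity of the componentwise product. Fix bases $f_1,\dots,f_a$ of $\mathcal P$ and $g_1,\dots,g_b$ of $\mathcal R$, where $a = \dim \mathcal P$ and $b = \dim \mathcal R$.

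First I would show that the family $\{f_i g_j : 1 \le i \le a,\ 1 \le j \le b\}$ spans $\mathcal P \mathcal R$. By definition $\mathcal P \mathcal R$ is spanned by products $fg$ with $f \in \mathcal P$ and $g \in \mathcal R$; writing $f = \sum_i \lambda_i f_i$ and $g = \sum_j \mu_j g_j$ with $\lambda_i,\mu_j \in \Fq$, bilinearity of polynomial multiplication gives
\[
    fg \;=\; \sum_{i,j} \lambda_i \mu_j\, f_i g_j,
\]
so every generator of $\mathcal P \mathcal R$ lies in the span of the $f_i g_j$'s. This immediately yields $\dim \mathcal P \mathcal R \le ab = \dim \mathcal P \cdot \dim \mathcal R$.

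For the square, apply the first part with $\mathcal R = \mathcal P$ (and hence $g_j = f_j$), and then use the commutativity of the polynomial product: $f_i f_j = f_j f_i$. This means the spanning family $\{f_i f_j\}_{1 \le i,j \le a}$ can be reduced to the sub-family indexed by pairs with $i \le j$, which has cardinality
\[
    \binom{a}{2} + a \;=\; \binom{a+1}{2}.
\]
Therefore $\dim \mathcal P^2 \le \binom{\dim \mathcal P + 1}{2}$.

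There is no real obstacle here: both bounds are purely formal consequences of bilinearity and commutativity of the multiplication in $\Fq[x]$, exactly as in the code setting. The only nuance worth mentioning is that the bound is in general not tight because the family $\{f_i g_j\}$ need not be linearly independent; however, linear independence is not needed to obtain an \emph{upper} bound on the dimension.
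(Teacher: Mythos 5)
Your proof is correct and follows exactly the argument the paper itself invokes (the paper omits a formal proof, simply pointing to the same basis-spanning reasoning used for Schur products of codes in Proposition~\ref{prop:randomsquare}). Bilinearity gives the product bound and commutativity cuts the spanning set down to $\binom{a+1}{2}$ elements for the square, just as you describe.
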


When considering space of polynomials of bounded degree, as in the case of GRS codes, we also have an 
explicit description of the product.

\begin{proposition}\label{prop:prod_space_poly_bounded}
\(
    (\Fq[x]_{<k}) (\Fq[x]_{< \ell}) = \Fq[x]_{<k + \ell - 1}.
\)
\end{proposition}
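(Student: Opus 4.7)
The plan is to prove both inclusions between the product space on the left and the polynomial space $\F_q[x]_{<k+\ell-1}$ on the right. The forward inclusion is immediate from degree bounds, while the reverse inclusion reduces to exhibiting, for each monomial $x^j$ with $0 \le j \le k+\ell-2$, a factorization $x^j = x^a \cdot x^b$ with $a < k$ and $b < \ell$.

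For the forward direction, I would pick arbitrary $f \in \F_q[x]_{<k}$ and $g \in \F_q[x]_{<\ell}$, and note that $\deg(fg) \le \deg f + \deg g \le (k-1) + (\ell-1) = k+\ell-2$, so $fg \in \F_q[x]_{<k+\ell-1}$. Since the left-hand side is spanned (by definition of the product of polynomial spaces) by such products, this yields $(\F_q[x]_{<k})(\F_q[x]_{<\ell}) \subseteq \F_q[x]_{<k+\ell-1}$.

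For the reverse inclusion, since $\F_q[x]_{<k+\ell-1}$ is spanned by the monomials $x^0, x^1, \dots, x^{k+\ell-2}$, it suffices to check that each such $x^j$ lies in the product space. If $j \le k-1$, write $x^j = x^j \cdot x^0$ with $x^j \in \F_q[x]_{<k}$ and $x^0 \in \F_q[x]_{<\ell}$. Otherwise, if $k \le j \le k+\ell-2$, write $x^j = x^{k-1} \cdot x^{j-k+1}$; here $x^{k-1} \in \F_q[x]_{<k}$ and $j-k+1$ satisfies $1 \le j-k+1 \le \ell-1 $, so $x^{j-k+1} \in \F_q[x]_{<\ell}$. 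In either case $x^j$ is a product of elements of the two factors, hence lies in $(\F_q[x]_{<k})(\F_q[x]_{<\ell})$.

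There is no real obstacle here; the only point requiring a moment of care is the case split ensuring that every exponent in the range $[0, k+\ell-2]$ admits a valid factorization with both factors respecting their respective degree bounds, which is why $k+\ell-1$ is tight rather than $k+\ell$.
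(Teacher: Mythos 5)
Your proof is correct: the forward inclusion by degree bounds and the reverse inclusion by factoring each monomial $x^j$ as $x^{\min(j,k-1)}\cdot x^{j-\min(j,k-1)}$ is the standard argument, and your case split handles all exponents in $[0,k+\ell-2]$ properly. The paper states this proposition without proof, treating it as classical, so there is nothing to compare against beyond noting that your argument is exactly the expected one.
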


\subsubsection{Relation with Schur squares.}
The previously introduced map
\[
    \evav : \map{\Fq[x]}{\Fq^n}{f}{(f(\alpha_1), \dots, f(\alpha_n))}
\]
is a surjective ring morphism when $\Fq^n$ is equipped with the Schur product $\star$. Consequently, from the setup in \eqref{eq:evC}, we obtain a surjective map
\(\mathcal{P}_{\C}^2 \rightarrow \C^2\) which yields:

\begin{proposition}\label{prop:bounding_dim_prod}
Let $\C = \evav(\mathcal{P}_\C)$ for $\mathcal{P}_\C \subseteq \Fq[x]_{<n}$. Then
\[
\dim \C^2 \leq \dim \mathcal{P}_{\C}^2.
\]
\end{proposition}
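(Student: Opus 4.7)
The plan is to exploit the ring morphism property of $\evav$ when $\Fqn$ is equipped with the Schur product, which is the observation already recorded in the paragraph preceding the statement. First I would note that the restriction of $\evav$ to any finite-dimensional subspace of $\Fq[x]$ is $\Fq$-linear, so its image has dimension bounded above by the dimension of the domain. The task therefore reduces to identifying $\C^2$ as the image of $\mathcal{P}_\C^2$ under $\evav$.

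Next I would verify that $\evav(\mathcal{P}_\C^2) = \C^2$. By Definition~\ref{def:schurCode}, $\C^2$ is spanned by the Schur products $\cv \star \dv$ with $\cv, \dv \in \C$. By hypothesis $\C = \evav(\mathcal{P}_\C)$, so one can write $\cv = \evav(f)$ and $\dv = \evav(g)$ for some $f, g \in \mathcal{P}_\C$. The ring morphism property then gives $\cv \star \dv = \evav(f) \star \evav(g) = \evav(fg)$, and $fg$ lies in $\mathcal{P}_\C^2$ by definition of the product of polynomial spaces. Conversely, every generator $fg$ of $\mathcal{P}_\C^2$ maps to a generator $\evav(f) \star \evav(g)$ of $\C^2$. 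Hence the restriction of $\evav$ to $\mathcal{P}_\C^2$ is surjective onto $\C^2$, and the dimension inequality follows immediately.

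There is essentially no obstacle to overcome, but it is worth pointing out why the inequality need not be sharp, as this is the conceptual content of the statement: even though $\evav$ is injective on $\Fq[x]_{<n}$, the space $\mathcal{P}_\C^2$ typically contains polynomials of degree up to $2(n-1)$, on which $\evav$ has the nontrivial kernel generated by $\prod_{i=1}^{n}(x - \alpha_i)$. Thus the surjective linear map $\evav|_{\mathcal{P}_\C^2} \colon \mathcal{P}_\C^2 \twoheadrightarrow \C^2$ may have a nonzero kernel, which is precisely what prevents the bound from being an equality and what makes the proposition a genuine upper bound rather than an isomorphism. This is also the phenomenon that will later be exploited to produce a distinguisher, since polynomial squares may collapse under evaluation when the polynomial degrees exceed $n$.
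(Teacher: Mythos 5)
Your proof is correct and follows exactly the paper's route: the paper likewise invokes the fact that $\evav$ is a surjective ring morphism for the Schur product to obtain a surjection $\mathcal{P}_\C^2 \twoheadrightarrow \C^2$, from which the dimension bound is immediate. Your additional remark on why the inequality can be strict (the kernel generated by $\prod_{i}(x-\alpha_i)$ once degrees exceed $n-1$) is accurate and consistent with how the paper later exploits this collapse.
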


\subsubsection{Relations with shortenings.}\label{sss:shorten_poly}
Given a code $\C = \evav(\mathcal{P})$ for some $\mathcal P \subseteq \Fq[x]$
and $I \subseteq [n]$, the shortening $\C_I$ of $\C$ at $I$ corresponds to
the evaluation of $\mathcal{P} \cap (p_I)$ where $(p_I)$ is the ideal spanned
by 
\[
    p_I (x) \eqdef \prod_{i \in I}(x - \alpha_i).
\]
In the sequel, we denote
\[
    \mathcal{P}_{p_I} \eqdef \mathcal{P} \cap (p_I).
\]

\subsection{The McEliece cryptosystem and its variants}\label{sec:McEliece}

McEliece code-based original cryptosystem was introduced in 1978 by McEliece \cite{McEliece} and it used binary Goppa codes. However, it corresponds to
very general framework that can be instantiated with many possible codes.

Consider a family of codes $\mathcal F$ parameterized by a set $\mathcal S$. Hence, we are given a map
\[
    \C : \map{\mathcal S}{\mathcal F}{s}{\C (s).}
\]
The above map is the trapdoor: $\C(s)$ should be easy to compute from the knowledge of $s$ but given $\C \in \mathcal F$, recovering $s \in \mathcal S$
such that $\C = \C (s)$ should be hard.
Moreover, suppose that for any $s \in S$ we are given a decoder $D(s)$
that corrects up to $t$ errors for the code $\C (s)$. Here again, decoding
should not be possible without the knowledge of $s$. Then, McEliece
encryption scheme can be described as follows:


\paragraph{Key generation.}
Draw $s \in \mathcal S$ at random. The secret key is $s$.\\ The public
key is a pair $(\G_{pub}, t)$, where $\G_{pub}$ is a generator matrix of $\C (s)$ and
$t$ is the number of errors that the algorithm $D(s)$ can decode.
\paragraph{Encryption.} To encrypt a plaintext $\mathbf{m} \in \Fq^k$,
choose a random $\mathbf{e}\in \Fqn$ with Hamming weight $wt_H(\mathbf{e}) = t$. The ciphertext is $\mathbf{c} \eqdef \mathbf{m} \mathbf{G}_{pub} + \mathbf{e}.$

\paragraph{Decryption.} Apply
the decoder $D(s)$ to the ciphertext $\cv$ to recover $\mathbf{m}$.

\subsubsection*{Examples of instantiations.}
McEliece's original proposal \cite{McEliece} is instantiated with classical Goppa codes, with the secret being the pair $(L, g)$ where $L$ is the so--called \emph{evaluation sequence} and $g$ is the \emph{Goppa Polynomial}. Later on,
Niederreiter \cite{Nie86} proposed to instantiate it with GRS codes, where the secret is $(\xv, \yv)$. This proposal was subsequently broken in \cite{SS92}.
Since then, several alternative instantiations have been suggested (the following list is far from being exhaustive): Reed-Muller codes \cite{S94}, Algebraic Geometry codes and their subfield subcodes \cite{JM96}, subcodes of GRS codes
\cite{BL05}, MDPC codes \cite{MTSB13}, among others. Several other proposals 
rely on slightly modified GRS codes; see, for instance, \cite{BBC+,KRW21,Wan16,Wie06}. However, many of the instantiations
based on codes ``close" to GRS codes have been vulnerable to attacks, with many such attacks involving the Schur product 
\cite{CGGOT,CL22,CLT19,CMP17,COT14a,COT17,Wie10}

Recently, Beelen, Bossert, Puchinger, and Rosenkilde proposed TRS codes in \cite{BBPR} claiming that the corresponding cryptosystem is resistant to the well-known attacks \cite{CGGOT,SS92,Wie}.

\subsection{Cryptanalysis of the McEliece system based on GRS codes and their variants}\label{subsec:cryptanalysisTGRS}
The TRS variant of the McEliece cryptosystem was already broken in a prior work by Lavauzelle and Renner \cite{LR}, which examined a setup different from ours. 
More precisely,
\begin{enumerate}
    \item they considered a prime power $q_0$, the integers $k<n\leq q_0-1$ with $2\sqrt{n}+6<k\leq n/2-2$ and a twist $\ell$ such that,
    \[
    \frac{n+1}{k-\sqrt{n}}<\ell+2<\min\{k+3, 2n/k, \sqrt{n}-2\}.
    \]
    Further, they set $q_i \eqdef q_{i-1}^2 = q_0^{2^i}$
for $i = 1, \ldots , \ell$, such that $\F_{q_0} \subset \F_{q_1} \subset \cdots \subset \F_{q_\ell} = \Fq$ is a chain of subfields. And finally they take $t_i = (i + 1)(r - 2) - k + 2$ and $h_i = r - 1 + i$ for $i = 1, \ldots , \ell$,
where $r \eqdef  \lceil \frac{n+1}{\ell +2}
\rceil + 2$.

Note that this choice of parameters guarantees the underlying TRS to be MDS \cite{BBPR}.
\item They assumed that the integers $q_0, n, k, \ell$ and the hook and twist vector $\mathbf{h},\mathbf{t}$ are \textit{public parameters.}
    
\end{enumerate}
Lavauzelle and Renner's approach \textbf{assumes the validity of claims in \cite{BBPR} regarding the indistinguishability of such codes} with respect to the Schur square even after shortening.
So they use a strategy based on the \textbf{recovery of the subfield subcode} to attack the system. The novelty of their approach lies in the fact that, in this case, they use the subfield subcode structure of the TRS to attack the system, whereas usually the subfield subcode operation is used to hide the structure of the code used in encryption to improve the security.
In particular, thanks to the previous choice of parameters, they describe the structure of the subfield subcode (in $\mathbb{F}_{q_0}$) as a subspace of low codimension of a classical RS code, and they exploit this code to recover the hidden TRS code.

In contrast, our work considers a \textbf{general TGRS family} without specific parameter assumptions, and we demonstrate that these codes, like GRS codes, can indeed be distinguished from random codes. Then, we extend the distinguisher-based attack of \cite{CGGOT} to such a code family.




\section{$\ell$--quasi--GRS codes}\label{sec:qGRS}
In this section, we introduce a broader class of codes called \emph{quasi--GRS} codes, which contains TGRS codes.
Notably, the distinguisher we describe further, along with most of the cryptanalysis techniques presented,
applies to quasi--GRS codes. The interest of this class is that it is closed under duality
and ``most of the times", closed under shortening.

\begin{definition}\label{def:qGRS}
Let $\aav \in \Fq^n$ be a sequence of distinct elements and $\vv \in (\Fq^\times)^n$. An $\ell$--quasi--GRS ($\ell$--qGRS) code is defined as
a code $\C$ such that
\[
    \C = \C_0 \oplus \C_1,
\]
where $\C_0$ is a subcode of codimension $\ell$ of $\GRS_k(\aav,\vv)$ and $\C_1$ has dimension $\ell$ and satisfies
$\C_1 \cap \GRS_k(\aav, \vv) = 0$.
\end{definition}

\begin{proposition}\label{prop:TGRS=qGRS}
A TGRS code with $\ell$ twists is an $\ell$--qGRS code.
\end{proposition}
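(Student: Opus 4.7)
The plan is to exhibit an explicit decomposition $\C = \C_0 \oplus \C_1$ at the polynomial level, using the description of $\mathcal{P}$ as a sum of an "untwisted" piece and a "twisted" piece, and then transfer it to codes via $\evav$. Throughout I assume the $h_j$'s are pairwise distinct (otherwise one collapses the corresponding twists), and similarly for the $t_j$'s; the general case reduces to this up to reindexing.

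First, I would rewrite a generic twisted polynomial as
\[
f = \sum_{i \in \{0,\dots,k-1\}\setminus\{h_1,\dots,h_\ell\}} f_i\, x^i \;+\; \sum_{j=1}^{\ell} f_{h_j}\bigl(x^{h_j} + \eta_j x^{k-1+t_j}\bigr).
\]
This splits $\mathcal{P}_{\mathbf{t},\mathbf{h},\bm{\eta}}^{n,k}$ as the (internal) sum of $\Monh{k}{\mathbf{h}}$ and $\mathcal{Q} \eqdef \Span{\,x^{h_j} + \eta_j x^{k-1+t_j} \;:\; 1\le j \le \ell\,}$. Setting $\C_0 \eqdef \evav(\Monh{k}{\mathbf{h}})$ and $\C_1 \eqdef \evav(\mathcal{Q})$, we immediately obtain $\C = \C_0 + \C_1$ and $\C_0 \subseteq \GRS_k(\aav,\vv)$, with $\dim \C_0 = k - \ell$ by injectivity of $\evav$ on $\Fq[x]_{<k}$. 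Hence $\C_0$ has codimension $\ell$ in $\GRS_k(\aav,\vv)$.

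The core of the proof is then to show $\C_1 \cap \GRS_k(\aav,\vv) = \{0\}$, which will in particular give $\C_0 \cap \C_1 = \{0\}$ (so the sum is direct) and $\dim \C_1 = \ell$. Suppose $\cv \in \C_1 \cap \GRS_k(\aav,\vv)$, written as $\cv = \evav(p_1) = \evav(g)$ with $p_1 = \sum_j c_j\bigl(x^{h_j}+\eta_j x^{k-1+t_j}\bigr)$ and $g \in \Fq[x]_{<k}$. Since $\deg(p_1 - g) \le k-1 + \max_j t_j \le n-1$ and $\evav$ is injective on $\Fq[x]_{<n}$, we get $p_1 = g$. Comparing the coefficients of the monomials of degree $\ge k$ on both sides, and using that the exponents $k-1+t_j$ are pairwise distinct and the $\eta_j$'s are nonzero, all $c_j$ vanish, so $\cv = 0$.

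I expect the only delicate point to be making sure $\evav$ is injective on the polynomial spaces involved, which amounts to the degree bound $\deg(p_1 - g) < n$; this uses precisely the constraints $t_j \le n-k$ and $h_j \le k-1$ from the TGRS definition. Once that is settled, everything else is formal: $\C_0$ provides the codimension-$\ell$ subcode of $\GRS_k(\aav,\vv)$, $\C_1$ provides the complementary $\ell$-dimensional piece transverse to $\GRS_k(\aav,\vv)$, and Definition~\ref{def:qGRS} is satisfied.
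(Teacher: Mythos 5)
Your proof is correct and follows exactly the same route as the paper: the paper's proof simply defines $\C_0 = \evav(\Monh{k}{\bm{h}})$ and $\C_1 = \evav(\Span{x^{h_j}+\eta_j x^{k-1+t_j} : j})$ and declares the result, whereas you additionally verify the direct-sum and transversality conditions via injectivity of $\evav$ on $\Fq[x]_{<n}$ and the degree bounds $t_j \le n-k$. Your explicit flagging of the nondegeneracy assumptions (distinct hooks/twists, nonzero $\eta_j$) is a welcome precision that the paper leaves implicit.
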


\begin{proof}
Using notation from Definition \ref{def:TGRS}, define 
\begin{align*}
\C_0 &\eqdef \ev (\Span{x^i ~:~
i \in [n] \setminus \{h_1, \dots, h_\ell\}}) = \ev(\Monh{k}{\bm{h}})  \\
\text{and}\quad \C_1 &\eqdef 
\ev (\Span{x^{h_1}+\eta_1 x^{k-1+t_1},\dots, x^{h_\ell} + \eta_\ell x^{k-1+t_\ell}}).
\end{align*}
This yields the result.
\end{proof}

\begin{proposition}
The dual of an $\ell$--qGRS code is an $\ell$--qGRS code.
\end{proposition}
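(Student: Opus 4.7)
The plan is to dualize the decomposition $\C = \C_0 \oplus \C_1$ directly. Set $G \eqdef \GRS_k(\aav, \vv)$ so that $G^\perp = \GRS_{n-k}(\aav, \vv')$ for some $\vv' \in (\Fq^\times)^n$. I will produce a decomposition $\C^\perp = \C_0' \oplus \C_1'$ of the required form, with $\C_0' = \C^\perp \cap G^\perp$ and $\C_1'$ any complement of $\C_0'$ inside $\C^\perp$.

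The first, crucial step is the identity $\C \cap G = \C_0$. The inclusion $\C_0 \subseteq \C \cap G$ is immediate. Conversely, if $c \in \C \cap G$, write $c = c_0 + c_1$ with $c_0 \in \C_0$ and $c_1 \in \C_1$; since $\C_0 \subseteq G$, we get $c_1 = c - c_0 \in G$, hence $c_1 \in \C_1 \cap G = \{0\}$, and $c = c_0 \in \C_0$. This step is where the hypothesis $\C_1 \cap G = \{0\}$ is used.

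Next I would dualize. From $\C \cap G = \C_0$, taking duals gives $\C^\perp + G^\perp = \C_0^\perp$, whose dimension is $n - (k - \ell) = n - k + \ell$. A Grassmann-type dimension count then yields
\[
\dim(\C^\perp \cap G^\perp) \;=\; \dim \C^\perp + \dim G^\perp - \dim(\C^\perp + G^\perp) \;=\; (n-k) - \ell,
\]
so that $\C_0' \eqdef \C^\perp \cap G^\perp$ is a subcode of $G^\perp = \GRS_{n-k}(\aav, \vv')$ of codimension exactly $\ell$, as required by Definition~\ref{def:qGRS}.

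Finally, picking any complement $\C_1'$ of $\C_0'$ inside $\C^\perp$, one has $\dim \C_1' = \ell$ and $\C^\perp = \C_0' \oplus \C_1'$. The remaining condition $\C_1' \cap G^\perp = \{0\}$ is then automatic: any $c \in \C_1' \cap G^\perp$ lies in $\C^\perp \cap G^\perp = \C_0'$, hence in $\C_0' \cap \C_1' = \{0\}$. There is no real obstacle: once the set-theoretic identity $\C \cap G = \C_0$ is established, the rest is dimension counting together with the fact that the dual of a GRS code is again a GRS code with the same evaluation sequence.
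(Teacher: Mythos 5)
Your proof is correct and follows essentially the same route as the paper's: both take $\C_0'$ to be the intersection of $\C^\perp$ with $\GRS_k(\aav,\vv)^\perp$ (the paper writes it as $\GRS_k(\aav,\vv)^\perp\cap\C_1^\perp$, which coincides with your $\C^\perp\cap G^\perp$ since $G^\perp\subseteq\C_0^\perp$), then pick a complement $\C_1'$ in $\C^\perp$ and check it misses $G^\perp$. You merely spell out the dimension count (via $\C\cap G=\C_0$ and Grassmann's formula) that the paper leaves implicit.
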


\begin{proof}
Let $\C = \C_0 \oplus \C_1$ be an $\ell$--qGRS code with $\C_0 \subseteq \GRS_k(\aav,\vv)$ and $\C_1 \cap \GRS_k(\aav, \vv) = 0$. Then, $\C$ has codimension $\ell$ in $\GRS_k (\aav, \vv) \oplus \C_1$.
Therefore, $\GRS_k(\aav, \vv)^\perp \cap \C_1^\perp$ has codimension $\ell$
in $\C^\perp$. Denote by $\C_0' \eqdef \GRS_k(\aav, \vv)^\perp \cap \C_1^\perp$ and let $\C_1'$ be a complement subspace of $\C_0'$ in $\C^\perp$. Therefore, it is clear that $\C_0' \oplus \C_1' = \C^{\perp}$ and $\dim(\C_1')=\ell$. Note that $\C_1' \cap \GRS_k(\aav,\vv)^\perp = \{0\}$, as otherwise, $ \GRS_k(\aav,\vv)^\perp \cap \C_1' \subseteq \GRS_k(\aav,\vv)^\perp \cap \C^{\perp} = \C_0'$ will lead to a contradiction. Now, since $\GRS_k(\aav,\vv)^\perp$ is again a GRS code and $\C'_0$ has codimension $\ell$
in $\GRS_k(\aav,\vv)^\perp$, we can conclude that $\C^{\perp}$ is an $\ell$--qGRS code.
\end{proof}

\begin{proposition}\label{prop:shorten_qGRS}
Let $\C$ be an $\ell$--qGRS code of dimension $k$ and $I \subseteq [n]$ with $|I| \leq k - \ell$.
If $\dim \C_I = k - |I|$, then $\C_I$ is an $\ell'$--qGRS code for some
$\ell' \leq \ell$.
\end{proposition}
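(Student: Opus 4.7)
My plan is to reason at the level of polynomial spaces using Section~\ref{ss:poly} and~\ref{sss:shorten_poly}, and to mimic, on the shortened code, the decomposition defining $\C$ as an $\ell$--qGRS code. Write $\C = \evav(\P)$ with $\P = \P_0 \oplus \P_1$, where $\P_0 \subseteq \Fq[x]_{<k}$ has codimension $\ell$ in $\Fq[x]_{<k}$ (so that $\evav(\P_0) = \C_0$) and $\dim \P_1 = \ell$ with $\P_1 \cap \Fq[x]_{<k} = 0$. By Section~\ref{sss:shorten_poly}, $\C_I$ corresponds to $\P \cap (p_I)$, while the lemma stated just after Proposition~\ref{prop:sq_GRS} implies that the shortened GRS code corresponds to $\Fq[x]_{<k} \cap (p_I) = p_I \cdot \Fq[x]_{<k-|I|}$, a subspace of dimension $k - |I|$.

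Next, I would set $\P_0' \eqdef \P \cap (p_I) \cap \Fq[x]_{<k}$. Since $\P_1 \cap \Fq[x]_{<k} = 0$, this equals $\P_0 \cap (p_I)$, and it is a subspace of $\Fq[x]_{<k} \cap (p_I)$. Let $\P_1'$ be any linear complement of $\P_0'$ inside $\P \cap (p_I)$. Evaluating yields a decomposition $\C_I = \evav(\P_0') \oplus \evav(\P_1')$, in which the first summand sits inside the shortened GRS code while the second intersects it trivially: any element of $\P_1' \cap \Fq[x]_{<k}$ would belong to $\P \cap (p_I) \cap \Fq[x]_{<k} = \P_0'$, hence to $\P_0' \cap \P_1' = 0$. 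This is exactly the decomposition required by Definition~\ref{def:qGRS}.

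It then remains to control $\ell' \eqdef \dim \P_1'$. The assumption $\dim \C_I = k - |I|$ yields $\dim(\P \cap (p_I)) = k - |I|$. Combining this with the fact that $\P_0$ has codimension $\ell$ in $\Fq[x]_{<k}$ and that $\Fq[x]_{<k} \cap (p_I)$ has codimension $|I|$ in $\Fq[x]_{<k}$, the usual Grassmann dimension inequality gives
\[
    \dim \P_0' = \dim(\P_0 \cap (p_I)) \geq k - \ell - |I|,
\]
whence $\ell' = (k - |I|) - \dim \P_0' \leq \ell$, as required. There is no real obstacle here: the only point of care is the passage between the ``zeros kept'' and ``zeros removed'' conventions for shortening, which the polynomial-space framework handles transparently.
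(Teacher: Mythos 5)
Your proof is correct, and it takes a genuinely different route from the paper's. The paper first embeds $\C$ in the supercode $\D = \GRS_k(\aav,\vv)\oplus\C_1$ and proves an auxiliary lemma (Lemma~\ref{lem:short_superGRS}) via a systematic generator matrix in row echelon form, showing $\D_I = \GRS_{k-|I|}(\aav',\vv')\oplus \D_1'$ with $\dim \D_I = \dim\D - |I|$; it then sets $\C_0' \eqdef \C_I\cap\GRS_{k-|I|}(\aav',\vv')$ and gets $\ell'\le\ell$ from the fact that $\C_I$ has codimension $\ell$ in $\D_I$. You bypass the supercode and the matrix argument entirely: working at the polynomial level, you identify the shortened GRS code as $\Fq[x]_{<k}\cap(p_I) = p_I\,\Fq[x]_{<k-|I|}$, set $\P_0' = \P\cap(p_I)\cap\Fq[x]_{<k} = \P_0\cap(p_I)$ (correctly justified by $\P_1\cap\Fq[x]_{<k}=0$), and obtain $\ell'\le\ell$ from Grassmann's inequality $\dim(\P_0\cap(p_I))\ge k-\ell-|I|$ combined with the hypothesis $\dim\C_I=k-|I|$. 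The common core is the final step --- intersect the shortened code with the shortened GRS code and take a complement --- but the mechanisms for the dimension count differ. Your version is shorter and coordinate-free, needing no auxiliary lemma; the paper's version isolates in Lemma~\ref{lem:short_superGRS} the exact behaviour of the supercode under shortening (including that its dimension always drops by exactly $|I|$), which makes explicit where the MDS property of GRS codes enters. The delicate points in your argument --- injectivity of $\evav$ on $\Fq[x]_{<n}$ so that dimensions and direct sums transfer, and the zeros-kept versus zeros-removed conventions for shortening --- are all handled correctly.
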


The proof of Proposition~\ref{prop:shorten_qGRS} rests on the following lemma.

\begin{lemma}\label{lem:short_superGRS}
Let $\D = \GRS_k(\aav, \vv) \oplus \D_1$ for some code $\D_1$ of dimension $\ell$.
Let $I \subseteq [n]$ such that $|I| < k$. Denote $s\eqdef |I|$. Then, $\dim \D_I = \dim \D - s$ and is of the form $\D_I = \GRS_{k-s}(\aav', \vv') \oplus \D_1'$
for some $\aav',\vv'\in \Fq^{n-s}$ and some code $\D_1' \in \Fq^{n-s}$ of dimension $\ell$.
\end{lemma}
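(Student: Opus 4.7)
The plan is to separate the task into two pieces: first identify the shortening of the $\GRS_k(\aav,\vv)$ summand as a smaller GRS code, and then produce a dimension-$\ell$ complement of it inside $\D_I$ by perturbing each element of $\D_1$ via a section of the evaluation-at-$I$ map. The engine powering everything is the MDS property of GRS codes, which ensures that the projection onto any $s<k$ prescribed coordinates is surjective.

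For the first piece, I would use the polynomial picture of Section~\ref{ss:poly}: the shortening of $\GRS_k(\aav,\vv)$ at $I$ corresponds to the evaluation of $\Fqxk \cap (p_I) = p_I \cdot \Fq[x]_{<k-s}$, which after puncturing the coordinates in $I$ equals $\GRS_{k-s}(\aav',\vv')$, where $\aav'$ is the restriction of $\aav$ to the coordinates outside $I$ and $v'_j = v_j\, p_I(\a_j)$ (nonzero because the $\a_i$ are distinct). Call this subcode $\GRS'$ of $\D_I$; it has dimension $k-s$.

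For the second piece, let $\phi:\D\to\Fq^s$ denote the projection onto the coordinates indexed by $I$. Since $\GRS_k(\aav,\vv)$ is MDS and $s<k$, the restriction $\phi|_{\GRS_k(\aav,\vv)}$ is already surjective; hence so is $\phi$, and $\dim\ker\phi = \dim\D - s = k+\ell-s$. Removing the zero coordinates realises an isomorphism $\ker\phi \to \D_I$, so $\dim\D_I = k+\ell-s$. Now fix any $\Fq$-linear section $\sigma:\Fq^s\to \GRS_k(\aav,\vv)$ of $\phi|_{\GRS_k(\aav,\vv)}$ and define, for $\xv_1\in\D_1$,
\[
    f(\xv_1) \eqdef \xv_1 - \sigma(\phi(\xv_1)).
\]
Then $\phi\circ f = 0$, so $f(\D_1)\subseteq\ker\phi$. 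Moreover, any $\xv_1$ such that $f(\xv_1)\in \GRS_k(\aav,\vv)$ satisfies $\xv_1 = f(\xv_1) + \sigma(\phi(\xv_1))\in \GRS_k(\aav,\vv)\cap \D_1=\{0\}$, which yields both the injectivity of $f$ and $f(\D_1)\cap \GRS_k(\aav,\vv)=\{0\}$. Letting $\D_1'$ be the image of $f(\D_1)$ after removing the zero coordinates gives a dimension-$\ell$ subspace of $\D_I$ meeting $\GRS'$ trivially; matching dimensions then forces $\D_I = \GRS' \oplus \D_1'$. The only point deserving care is that $\D_1'$ depends on the non-canonical choice of section $\sigma$, but since the statement only claims existence, any section suffices.
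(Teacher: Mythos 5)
Your proof is correct, but it follows a genuinely different route from the paper's. The paper argues entirely with generator matrices: using the MDS property it puts $\GRS_k(\aav,\vv)$ in systematic form, eliminates to obtain a generator matrix of $\D$ of the shape $\left(\begin{smallmatrix}\mathbf{I}_k & \mathbf{A}\\ (0) & \mathbf{B}\end{smallmatrix}\right)$ with $\mathbf{B}$ in row echelon form, and reads off $\D_I$ (dimension and structure) by deleting the first $s$ rows and columns; the GRS part of the conclusion is then delegated to the earlier unproved lemma on shortenings of GRS codes. You instead split the work into (i) the polynomial description $\Fqxk\cap(p_I)=p_I\,\Fq[x]_{<k-s}$, which has the merit of explicitly exhibiting the new parameters $\aav'$ and $v_j'=v_j\,p_I(\a_j)$ rather than citing the lemma, and (ii) a coordinate-free construction of the complement $\D_1'$ via a section $\sigma$ of the projection $\phi|_{\GRS_k(\aav,\vv)}$, whose surjectivity (your use of the MDS hypothesis) plays exactly the role that the systematic form plays in the paper. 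Your argument avoids the WLOG permutation of coordinates and makes the direct-sum structure of $\D_I$ transparent; the paper's is more algorithmic and immediately yields a generator matrix. All the steps you give check out: $f(\D_1)\subseteq\ker\phi$, the injectivity of $f$ and the trivial intersection with $\GRS_k(\aav,\vv)$ both follow from $\GRS_k(\aav,\vv)\cap\D_1=\{0\}$, and the dimension count $(k-s)+\ell=\dim\D_I$ closes the argument.
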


\begin{proof}
Without loss of generality, by possibly permuting the entries of codewords,
one can assume that $I = \{1, \dots, s\}$.
Since $\GRS_k(\aav, \vv)$ is MDS, it has a systematic generator matrix:
\(
    \begin{pmatrix}
        \mathbf{I}_k & \big| & \mathbf A
    \end{pmatrix}
\)
for some $\mathbf A \in \Fq^{k\times(n-k)}$.
Then, by elimination, $\D$ has a generator matrix of the form
\[
    \begin{pmatrix}
        \mathbf{I}_k & \mathbf A \\
        (0) & \mathbf B
    \end{pmatrix}
\]
for some matrix $\mathbf B \in \Fq^{\ell \times (n-k)}$ in row echelon form.
Let $\G$ be the matrix obtained by removing the first $s$ rows and columns of the above matrix. The matrix $\G$ is nothing but a generator matrix of $\D_I$.
Since $\mathbf B$ is in row echelon form, we have the result on the dimension of $\D_I$.
Finally, the $k-s$ first rows of $\G$ give a generator matrix of the shortened GRS code, which is itself a GRS code. This yields the result on the structure of $\D_I$.
\end{proof}

\begin{proof}[Proof of Proposition~\ref{prop:shorten_qGRS}]
By definition, $\C$ has codimension $\ell$ in a code \[\D \eqdef \GRS_k(\aav, \vv) \oplus \C_1\] for some code $\C_1$ of dimension $\ell$. By Lemma~\ref{lem:short_superGRS}, $\D_I$ has dimension $k + \ell - |I|$. Since by assumption, $\dim \C_I = \dim \C - |I|$, then $\C_I$ has codimension $\ell$ in $\D_I$.
By Lemma~\ref{lem:short_superGRS} again, $\D_I = \GRS_{k-s}(\aav',\vv') \oplus \C_1'$ for some $\aav', \vv' \in \Fq^{n-s}$ and some code $\C_1' \subseteq \Fq^{n-s}$. Let $\C_0' \eqdef \C_I \cap \GRS_{k-s}(\aav',\vv')$. The code $\C'_0$ 
has codimension $\ell' \leq \ell$ in $\GRS_{k-s}(\aav', \vv')$ and since $\dim \C_I = k-s$, we deduce that $\C_I$ is the direct sum of a codimension $\ell'$ subspace of $\GRS_{k-s}(\aav',\vv')$ and a code of dimension $\ell$'. Hence it is an $\ell'$-qGRS code.
\end{proof}

\begin{remark}
Note that the typical scenario is that the shortening of an $\ell$-qGRS code
remains an $\ell$--qGRS. Cases where $\ell$ decreases are sporadic.
For $\ell =1$, we can observe the following.
\end{remark}
\begin{lemma}\label{codim1 lemma}
Let $\EE$ be a subspace of codimension $1$ of $\Mon k$ and assume that 
\[p(x)= \prod\limits_{\alpha_i,i  \in I}(x-\alpha_i)\quad \text{for}\quad I \subsetneq [n]\quad
\text{with}\quad |I| = a.\] Then $\EE_{p(x)}$ has codimension $0 $ or $1$ in $p(x) \Mon{k-a} $.
    \end{lemma}
\begin{proof}
By definition, $\EE_{p(x)} \subseteq p(x) \Mon{k-a}$ and $\dim p(x) \Mon{k-a} = k -a$. Thus $\dim \EE_{p(x)} \le k -a$. 
On the other hand, note that \begin{equation}
\EE_{p(x)} = \bigcap_{i=1}^{a} \ker  \left( \begin{tabular}{l}
  $\EE \to \Fq$\\
  $f \mapsto f(\alpha_i)$
\end{tabular}       \right).
\end{equation}
This implies $\dim(\EE_{p(x)}   )   \ge \dim(\EE ) -a =k-1-a$, and the result follows.
\end{proof}

\begin{remark}
It turns out that, in general, the shortening of a TGRS code is not a TGRS one.
Therefore, Proposition~\ref{prop:shorten_qGRS} highlights an important interest of the notion of qGRS codes compared to TGRS codes. This will be particularly useful
in the sequel, as the shortening operation plays a crucial role in the distinguishers and attacks to be discussed.
\end{remark}



\section{TGRS codes and a distinguisher based on Schur product}\label{sec:distinguisher}

In this section, we show how shortening and squaring can be exploited to distinguish
qGRS codes from random ones for suitable parameters. From Proposition~\ref{prop:shorten_qGRS}, this
straightforwardly leads a distinguisher on TGRS codes for suitable parameters.
We start by discussing the general setting, \emph{i.e.} an arbitrary number of $\ell$ twists. We then focus specifically on the case where $\ell=1$, for which we will derive an attack in the subsequent section.

\subsection{A distinguisher for general $\ell$}

\subsubsection{Polynomial setting.}\label{ss:poly_setting}
Let  $(\aav, \vv)$ such that $\aav$ is a sequence of distinct elements and $\vv$ is a sequence of nonzero elements both in $\Fq$. We have the corresponding evaluation map
$\evav : \Fq[x] \rightarrow \Fq^n$.

In what follows, we consider a qGRS code $\C$ (see Definition~\ref{def:qGRS}) such that
\[
    \C = \C_0 \oplus \C_1 
\]
where $\C_0$ has codimension $\ell$ in $\GRS_k(\aav,\vv)$ (thus $\C_0$ is a subcode of $\GRS_k(\aav,\vv)$), $\C_1$ has dimension $\ell$ and $\C_1 \cap \GRS_k(\aav,\vv) = 0$.
In particular, $\C$ has dimension $k$.

Via the correspondence discussed in \S~\ref{ss:poly}, $\C = \evav (\mathcal{P})$
such that $\P \subseteq \Fq[x]_{<n}$ and
\[
    \P = \P_0 \oplus \P_1,
\]
where $\P_0$ has codimension $\ell$ in $\Fq[x]_{<k}$ and $\P_1 \cap \Fq[x]_{<k} = 0$.
Denote $v_1, \dots, v_\ell$ a basis of $\P_1$ whose elements have strictly
increasing degrees and denote by
\begin{equation}\label{eq:dl}
d_\ell \eqdef \deg (v_\ell)
\end{equation}
which is the maximal possible degree for an element of $\P$.

\subsubsection{Inequalities for distinguishing.}
For distinguishing $\C$ from a random code, we examine the dimension of $\C^2$
or $\C_I^2$ for some $I \subseteq [n]$.
For $\C$ to be distinguishable from random codes, we need
\[
    \dim \C_I^2 \leq \min\left\{n - |I|, {\dim \C_I + 1 \choose 2}\right\}.
\]
\begin{remark}
For $|I| < k$, the typical situation is that $\dim \C_I = k - |I|$
and we will assume this situation throughout this section. Note that
if this condition does not hold for almost any such $I$, such an observation would yield a distinguisher on the code.
\end{remark}

\begin{proposition}\label{prop:dim_sq_qGRS} Let $\C$ be an $\ell$-qGRS code
  of dimension $k$ obtained by evaluation of $\P \subseteq \Fqx$ with
  maximal degree $d_\ell$. Then,
\[
    \dim \C^2  \leq \min \left\{ (\ell + 2)k-1 -  \frac{\ell(\ell-1)}{2} ,\ k+ d_\ell+ \frac{\ell(\ell+1)}{2} ,\ 2 d_\ell + 1  \right\}.
\]    
\end{proposition}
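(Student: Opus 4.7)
The plan is to pass to the polynomial side via Proposition~\ref{prop:bounding_dim_prod}, which gives $\dim \C^2 \leq \dim \P^2$, and then establish the three upper bounds on $\dim \P^2$ separately. Using $\P = \P_0 \oplus \P_1$, I would immediately write
\[
\P^2 \;=\; \P_0^2 \;+\; \P_0\P_1 \;+\; \P_1^2,
\]
and each of the three terms in the $\min$ will correspond to a different way of regrouping and bounding these three summands. A preliminary useful observation is that every nonzero element of $\P_1$ has degree at least $k$: if $v \in \P_1$ had $\deg v < k$, then $v \in \P_1 \cap \Fq[x]_{<k} = 0$. Hence, in the basis $v_1,\dots,v_\ell$ of $\P_1$ with strictly increasing degrees, we have $k \leq \deg v_1 < \cdots < \deg v_\ell = d_\ell$, and in particular $\P \subseteq \Fq[x]_{\leq d_\ell}$.

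For the first bound, I would use $\P_0 \subseteq \Fq[x]_{<k}$ together with Proposition~\ref{prop:prod_space_poly_bounded} to get $\dim \P_0^2 \leq 2k-1$. Then, writing $\P_0 \P_1 = \sum_{j=1}^\ell \P_0 \cdot v_j$ and observing that multiplication by the nonzero polynomial $v_j$ is injective, each summand has dimension $k-\ell$, so $\dim \P_0 \P_1 \leq \ell(k-\ell)$. Proposition~\ref{prop:dim_prod_space_poly} gives $\dim \P_1^2 \leq \binom{\ell+1}{2}$. Summing and simplifying,
\[
(2k-1) + \ell(k-\ell) + \tfrac{\ell(\ell+1)}{2} \;=\; (\ell+2)k - 1 - \tfrac{\ell(\ell-1)}{2},
\]
which is the first term in the $\min$.

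For the second bound, I would regroup the first two summands: $\P_0^2 + \P_0 \P_1 = \P_0 \cdot \P \subseteq \Fq[x]_{<k} \cdot \Fq[x]_{<d_\ell+1} \subseteq \Fq[x]_{<k+d_\ell}$ by Proposition~\ref{prop:prod_space_poly_bounded}, contributing at most $k+d_\ell$ to the dimension; adding $\dim \P_1^2 \leq \binom{\ell+1}{2}$ gives the second term. For the third, the inclusion $\P \subseteq \Fq[x]_{\leq d_\ell}$ yields $\P^2 \subseteq \Fq[x]_{\leq 2 d_\ell}$, hence $\dim \P^2 \leq 2 d_\ell + 1$.

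The arithmetic is routine and I do not anticipate any real obstacle beyond correctly matching the three stated quantities to the three natural ways of regrouping $\P_0^2$, $\P_0\P_1$, and $\P_1^2$; no new structural input on TGRS codes is required beyond the polynomial description of a qGRS code and the degree constraint $\deg v_j \geq k$ forced by $\P_1 \cap \Fq[x]_{<k} = 0$.
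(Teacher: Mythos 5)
Your proof is correct and follows essentially the same route as the paper's: pass to $\P^2 = \P_0^2 + \P_0\P_1 + \P_1^2$ via Proposition~\ref{prop:bounding_dim_prod}, bound the three summands separately for the first inequality, regroup $\P_0^2+\P_0\P_1 = \P_0\P \subseteq \Fq[x]_{<k+d_\ell}$ for the second, and use $\P \subseteq \Fq[x]_{<d_\ell+1}$ for the third. Your explicit justification of $\dim\P_0\P_1 \leq \ell(k-\ell)$ via the decomposition into the spaces $\P_0 v_j$ is a minor (and valid) elaboration of the paper's direct appeal to Proposition~\ref{prop:dim_prod_space_poly}.
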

    
\begin{proof}
Consider the notation introduced in \S~\ref{ss:poly_setting}.
 Proposition~\ref{prop:bounding_dim_prod} asserts that
\[\dim \C^2 \leq \dim \P^2.\]
Thus we will estimate $\dim {\P}^2$. Note that
\begin{equation}\label{eq:P2}
    \P^2 = \P_0^2 + \P_0 \P_1 + \P_1^2.
\end{equation}
Since $\P_0 \subseteq \Fq[x]_{<k}$,
Proposition~\ref{prop:prod_space_poly_bounded} asserts that
$\dim \P_0^2 \leq 2k-1$. Then, by
Proposition~\ref{prop:dim_prod_space_poly},
\[
    \dim \P_0 \P_1 \leq \ell (k - \ell) \quad \text{and} \quad
    \dim \P_1^2 \leq \frac{\ell(\ell + 1)}{2}\cdot
\]
Including the last estimates in \eqref{eq:P2} yields:
\[
    \dim \P^2 \leq 2k-1 + \ell (k - \ell) + \frac{\ell (\ell+1)}{2} = (\ell+2)k-1 - \frac{\ell (\ell-1)}{2},
\]
from which we can deduce the first upper bound.

Next observe that, from the definition of $d_{\ell}$, $\P \subseteq \Fq[x]_{<d_\ell + 1}$. Therefore, rewriting \eqref{eq:P2} as
\[
    \P^2 \subseteq \P_0 \Fq[x]_{<d_\ell +1} + \P_1^2,
\]
then, since $\P_0 \subseteq \Fq[x]_{<k}$, from Proposition~\ref{prop:prod_space_poly_bounded} we get,
\[
\P^2 \subseteq \Fq[x]_{< k + d_{\ell}} + \P_1^2
\]
which yields the second upper bound:
\[
    \dim \C^2 \leq k + d_\ell + \frac{\ell (\ell + 1)}{2}\cdot
\]
Finally, from the inclusion $\P \subseteq \Fq[x]_{<d_\ell +1}$
together with Proposition~\ref{prop:prod_space_poly_bounded}, we can
deduce that $\P^2 \subseteq \Fq[x]_{<2d_\ell +1}$ yielding the third
inequality:
\[
    \dim \C^2 \leq 2 d_\ell +1.
\]
\end{proof}    
    
\begin{remark}
  The third upper bound $\dim \C^2 \leq 2 d_\ell +1$ corresponds to
  the dimension of the square of the code referred to as the
  \emph{outer code} in \cite{BBPR}.
\end{remark}

\begin{theorem}\label{thm:dimCI2}
    Let $\C$ be an $\ell$--qGRS of dimension $k$ and $I \subseteq [n]$
    be a set such that $|I| < k$ such that $\dim \C_I = k - |I|$, then,
    \begin{align*}
        \dim \C_I^2 \leq \min \Big\{
            (\ell+2)(k-|I|) &- 1 - \frac{\ell (\ell - 1)}{2},\\
            k& + d_\ell - 2|I| + \frac{\ell (\ell+1)}{2},\
            2(d_\ell - |I|) + 1
        \Big\}.
    \end{align*}
\end{theorem}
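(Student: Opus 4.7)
The plan is to mirror the proof of Proposition~\ref{prop:dim_sq_qGRS} after translating the shortening into the polynomial setting of \S~\ref{sss:shorten_poly}. Write $\C = \evav(\P)$ with $\P = \P_0 \oplus \P_1$ as in \S~\ref{ss:poly_setting}; then $\C_I$ is the image of the space $\mathcal R := \{g \in \Fq[x] : p_I g \in \P\}$ under the modified evaluation map on the positions outside $I$ with multipliers $v_i p_I(\alpha_i)$ (which are nonzero since the $\alpha_i$ are distinct). The hypothesis $\dim \C_I = k-|I|$ forces $\dim \mathcal R = k-|I|$, and Proposition~\ref{prop:bounding_dim_prod} reduces the task to bounding $\dim \mathcal R^2$.

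Next, I would decompose $\mathcal R$ analogously to $\P$. Setting $\mathcal R_0 := \mathcal R \cap \Fq[x]_{<k-|I|}$, multiplication by $p_I$ identifies $\mathcal R_0$ with $\P_0 \cap p_I \Fq[x]_{<k-|I|}$, because the condition $\deg(p_I g) < k$ forces $p_I g \in \P \cap \Fq[x]_{<k} = \P_0$. Since $\P_0$ has codimension $\ell$ in $\Fq[x]_{<k}$, a standard codimension argument gives $\dim \mathcal R_0 \geq (k-|I|)-\ell$. Write $\mathcal R = \mathcal R_0 \oplus \mathcal R_1$ with $\ell' := \dim \mathcal R_1 \leq \ell$, picking a basis of $\mathcal R_1$ with strictly increasing degrees; all elements have degree at most $d_\ell - |I|$ because $\P \subseteq \Fq[x]_{<d_\ell+1}$ implies $\mathcal R \subseteq \Fq[x]_{<d_\ell+1-|I|}$. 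From the expansion
\[
    \mathcal R^2 = \mathcal R_0^2 + \mathcal R_0 \mathcal R_1 + \mathcal R_1^2,
\]
the three bounds of the theorem follow by the same reasoning as in Proposition~\ref{prop:dim_sq_qGRS}, applied to the new parameters: (i)~bounding each summand via Proposition~\ref{prop:prod_space_poly_bounded}; (ii)~observing $\mathcal R_0^2 + \mathcal R_0 \mathcal R_1 \subseteq \mathcal R_0 \cdot \Fq[x]_{<d_\ell+1-|I|} \subseteq \Fq[x]_{<k+d_\ell-2|I|}$ and adding $\dim \mathcal R_1^2 \leq \ell(\ell+1)/2$; (iii)~the inclusion $\mathcal R^2 \subseteq \Fq[x]_{<2(d_\ell-|I|)+1}$.

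The subtle point is bound (i), where the direct computation yields $(\ell'+2)(k-|I|) - 1 - \ell'(\ell'-1)/2$, while the theorem states the result in terms of $\ell$. To conclude, I would verify that $x \mapsto (x+2)(k-|I|)-x(x-1)/2$ is nondecreasing on $[\ell',\ell]$; its derivative equals $(k-|I|)-x+1/2$, which is positive on $[0,k-|I|]$, so the comparison holds in the natural regime $\ell \leq k-|I|$ (already implicit in Proposition~\ref{prop:shorten_qGRS}). Bounds (ii) and (iii) are automatic, as $\ell(\ell+1)/2$ is monotone in $\ell$ and the third bound does not involve $\ell$ at all.
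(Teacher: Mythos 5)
Your proof is correct and follows essentially the same route as the paper: the paper invokes Proposition~\ref{prop:shorten_qGRS} to see that $\C_I$ is an $\ell'$--qGRS code with $\ell'\leq\ell$ and then applies Proposition~\ref{prop:dim_sq_qGRS} with the substitutions $k\leftarrow k-|I|$, $d_\ell\leftarrow d_\ell-|I|$, finishing with the same monotonicity-in-$\ell$ observation, whereas you simply inline that reduction at the polynomial level via $\mathcal R=\P_{p_I}/p_I$. Your explicit verification that the first bound is nondecreasing in $\ell$ only on the range $\ell\leq k-|I|$ is in fact more careful than the paper's blanket claim that the bounds are increasing in $\ell$.
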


\begin{proof}
According to Proposition~\ref{prop:shorten_qGRS}, the shortening of $\C$
is still an $\ell'$--qGRS code for $\ell' \leq \ell$. Thus, the result is
a direct consequence of Proposition~\ref{prop:dim_sq_qGRS} where the following changes of variables, are applied:
\[
    \ell \leftarrow \ell' \quad
    k \leftarrow k-|I| \quad
    d_{\ell} \leftarrow d_\ell - |I|.
\]
Two things may require clarifications.
First, according to this change of variables the result should involve $\ell'$ and not $\ell$. However, since the upper bounds of Proposition~\ref{prop:dim_sq_qGRS} are increasing functions in $\ell$, if they hold for $\ell' \leq \ell$ they still hold for $\ell$. Thus $\ell'$ can be replaced by $\ell$ in the final bounds.

Second, the change of variable $d_\ell \leftarrow d_\ell - |I|$ requires further details. From \S~\ref{sss:shorten_poly}, the shortened code $\C_I$ corresponds to the
polynomial space $\P_{p_I}$ where $p_I(x) \eqdef \prod_{i \in I} (x-\alpha_i)$.
Since $\P \subseteq \Fq[x]_{< d_\ell + 1}$, then $\P_{p_I} \subseteq p_I \Fq[x]_{< d_\ell - |I| + 1}$. Therefore, when shortening by
forgetting the common factor $p_I$, we effectively deal with polynomials of degree $\leq d_\ell - |I|$.
This concludes the proof.
\end{proof}

\begin{corollary}\label{cor:sq_TGRS}
Let $\C$ be a $(\bm{\a},\bm{v},\mathbf{t},\mathbf{h},\mathbf{\eta})$ TGRS code of dimension $k$.
Denote by $t_{max}$ the largest entry of $\mathbf{t}$. Let $I \subseteq [n]$ with $|I| < k$. Then
\begin{align*}
        \dim \C_I^2 \leq \min \Big\{
            (\ell+2)(k-|I|) &- 1 - \frac{\ell (\ell -1)}{2},\\
            2k + t_{max}-1 - 2|I| &+ \frac{\ell (\ell+1)}{2},\
            2(k+t_{max} - |I|) - 1
        \Big\}.
    \end{align*}
\end{corollary}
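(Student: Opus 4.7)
The plan is a direct reduction to Theorem~\ref{thm:dimCI2}. First, I would invoke Proposition~\ref{prop:TGRS=qGRS} to view the TGRS code $\C$ as an $\ell$--qGRS code, so that Theorem~\ref{thm:dimCI2} applies (modulo the genericity condition $\dim \C_I = k - |I|$, which is the typical situation for $|I| < k$ and on which the theorem relies). The only thing left is to determine the value of $d_\ell$ for a TGRS code in the sense of \S\ref{ss:poly_setting}, namely the maximum degree of an element of the polynomial space $\P = \mathcal{P}^{n,k}_{\mathbf t, \mathbf h, \bm \eta}$.

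Second, I would read off $d_\ell$ from Definition~\ref{def:TGRS}. The monomials of highest degree occurring in elements of $\P$ are the twisted monomials $x^{k-1+t_j}$, so the largest possible degree is
\[
    d_\ell = k - 1 + t_{\max}
\]
(assuming, as is generic, that $\eta_j \neq 0$ for at least one index $j$ achieving $t_j = t_{\max}$; otherwise the effective number of twists drops and the statement follows a fortiori since the bounds of Theorem~\ref{thm:dimCI2} are monotone in $\ell$).

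Third, I would substitute $d_\ell = k-1+t_{\max}$ into the three bounds of Theorem~\ref{thm:dimCI2}. The first bound $(\ell+2)(k-|I|) - 1 - \tfrac{\ell(\ell-1)}{2}$ does not depend on $d_\ell$ and is copied verbatim. The second bound becomes
\[
    k + d_\ell - 2|I| + \frac{\ell(\ell+1)}{2} = 2k + t_{\max} - 1 - 2|I| + \frac{\ell(\ell+1)}{2},
\]
and the third becomes
\[
    2(d_\ell - |I|) + 1 = 2(k + t_{\max} - |I|) - 1,
\]
which matches exactly the three quantities in the corollary.

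No real obstacle is expected: the content of the corollary is entirely in the reduction carried out in Theorem~\ref{thm:dimCI2}, and the only computation is the evaluation of $d_\ell$ in the TGRS setting. The mild subtlety is the genericity assumption on both $\dim \C_I$ and on the top coefficient $\eta_j$ attaining $t_{\max}$; both are handled by the monotonicity of the bounds in $\ell$ together with the standing hypothesis of Theorem~\ref{thm:dimCI2}.
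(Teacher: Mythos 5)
Your proposal is correct and is essentially the paper's own proof: the paper derives the corollary directly from Theorem~\ref{thm:dimCI2} by observing that $d_\ell = k + t_{\max} - 1$ for a TGRS code and substituting into the three bounds. Your extra remarks on the genericity of $\dim \C_I = k - |I|$ and on a vanishing top coefficient $\eta_j$ (handled by monotonicity in $\ell$) are sound and, if anything, slightly more careful than the paper's one-line argument.
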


\begin{proof}
This is a straightforward consequence of  Theorem~\ref{thm:dimCI2} after observing that $d_\ell = k + t_{max}-1$.
\end{proof}

\subsubsection{The range of the distinguisher.}
There remains to analyze for which parameters qGRS codes can be distinguished from random ones. Since the class of qGRS codes is closed under duality,
one can assume that $k \leq \frac n 2$ to estimate the range of the
distinguisher.

Now our objective is to identify for which parameters $(n, k, \ell)$ there
exists $a < k$ such that for $I \subseteq [n]$ with $|I|=a$ we have
\begin{equation}{\label{eq:dist_short}}
    \dim \C_I^2 < \min \left\{n-a, \frac{(k-a)(k-a+1)}{2} \right\}.
\end{equation}
Theorem~\ref{thm:dimCI2} yields 3 distinct bounds on $\dim \C_I^2$. The first one
is that which decreases the fastest in $a$, thus, we will only consider this one,
which permits a simpler analysis. Later on, when focusing on $\ell = 1$, the other bounds will be useful.

Now, we have to prove the existence of $0 \leq a < k$ such that
\[
    (\ell + 2)(k-a) - 1 - \frac{\ell (\ell-1)}{2} < \min \left\{
    n-a, \frac{(k-a)(k-a+1)}{2}\right\}.
\]
A calculation proves that the upper bound $< n-a$ holds for any $a$
satisfying
\begin{equation}\label{eq:lower_a}
    a > \left(\frac{\ell + 2}{\ell +1}\right) k {- \frac{\ell(\ell-1)}{2(\ell +1)} - \frac{n}{\ell+1}}\cdot
\end{equation}
For the second inequality, we consider a stronger and simpler one:
\[
    (\ell+2)(k-a) \leq \frac{(k-a)^2}{2},
\]
which yields 
\begin{equation}\label{eq:simple_a}
    a \leq k - 2(\ell+2).
\end{equation}
Now, putting \eqref{eq:lower_a} and \eqref{eq:simple_a} together,
we can conclude that if there exists an $a$ where $a=|I|$ and such that
 $$\left(\frac{\ell + 2}{\ell +1}\right) k {- \frac{\ell(\ell-1)}{2(\ell +1)} - \frac{n}{\ell+1}}<a\leq  k - 2(\ell+2)$$
 then dimension of the shortened square code $\CC_I^2$ satisfies \eqref{eq:dist_short} and so we can distinguishable these kind of codes from random ones.

\begin{theorem}
    An $\ell$--qGRS code of dimension $k$ is distinguishable from random as soon as:
    \[
    {n - k \geq \frac{3}{2} \ell^2 + \frac{5}{2} \ell + 4}.
\]
\end{theorem}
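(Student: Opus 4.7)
The plan is to invoke the three-part upper bound of Theorem~\ref{thm:dimCI2} and to compare it with the expected dimension of the square of a generic (shortened) code. Because we are after a sufficient condition on $n - k$, I would focus on the first of the three bounds, namely
\[
\dim \C_I^2 \le (\ell+2)(k-a) - 1 - \frac{\ell(\ell-1)}{2}, \qquad a \eqdef |I|,
\]
since it is the bound whose leading term decreases the fastest in $a$ and thus the one most favourable for producing a separation. The comparison baseline comes from Propositions~\ref{prop:randomsquare} and~\ref{prop:randomsquaredim}: for a generic code of dimension $k - a$ and length $n - a$, the square has dimension $\min\bigl\{n-a,\ (k-a)(k-a+1)/2\bigr\}$ with high probability. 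The goal is therefore to exhibit an integer $a$ with $0 \le a < k$ for which the structural upper bound above is strictly smaller than this expected value.

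The next step is to separate this goal into two inequalities. The first, comparing the structural bound to $n - a$, rearranges linearly and yields the lower bound~\eqref{eq:lower_a} on $a$. The second, comparing it to $(k-a)(k-a+1)/2$, is genuinely quadratic in $a$; rather than solve the quadratic exactly, I would use the strictly stronger but much simpler sufficient condition $(\ell+2)(k-a) \le (k-a)^2/2$, which delivers the clean upper bound $a \le k - 2(\ell+2)$ of~\eqref{eq:simple_a}. To conclude, I would simply require these two bounds on $a$ to be compatible, i.e.\ the lower bound from~\eqref{eq:lower_a} to be at most the upper bound from~\eqref{eq:simple_a}; rearranging the resulting inequality isolates $n-k$ on one side and produces the announced quadratic threshold $\tfrac{3}{2}\ell^2 + \tfrac{5}{2}\ell + 4$ in $\ell$.

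The main obstacle is not conceptual -- all the heavy lifting is already packaged in Theorem~\ref{thm:dimCI2} and Propositions~\ref{prop:randomsquare}--\ref{prop:randomsquaredim} -- it is really bookkeeping: one has to be careful to preserve strict inequalities throughout and to check that the $a$ obtained actually lies in the admissible range $[0, k)$, rather than merely satisfying a formal comparison of real numbers. Two small side observations complete the picture. First, the standing hypothesis $\dim \C_I = k - a$ is generic for $|I| < k$, and any failure of it for a random $I$ of the chosen size would itself constitute a distinguisher, so it can be assumed without loss of generality. Second, in the regime where $k$ is too large compared to $\sqrt{n}$ for Proposition~\ref{prop:randomsquaredim} to apply, the random-square dimension saturates at $n - a$, so only the comparison with $n - a$ is relevant and the argument goes through identically. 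These remarks plus the algebraic rearrangement outlined above deliver the theorem.
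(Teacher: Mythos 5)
Your proposal follows essentially the same route as the paper: restrict to the first (fastest-decreasing) bound of Theorem~\ref{thm:dimCI2}, compare it with $\min\{n-a,\ (k-a)(k-a+1)/2\}$, derive the lower bound \eqref{eq:lower_a} and the simplified upper bound \eqref{eq:simple_a} on $a$, and require compatibility of the two. This is exactly the paper's argument, so the proposal is correct as an account of the intended proof.
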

Recall here that we supposed $k \leq \frac n 2$, the left--hand side is then
larger than $\frac n 2$ while the right--hand side is in $O(\ell^2)$.
For TGRS codes proposed for McEliece encryption, $\ell$ is supposed to be $O(1)$ since the decoding is exponential
in $\ell$. So we can conclude that: \textbf{TGRS codes proposed for McEliece encryption are distinguishable from random}.

\subsection{The case $\ell=1$}
Let us apply the previous results to a single twisted TGRS code, \emph{i.e.}
in the case $\ell = 1$.
Recall that $\C=\ev(\P)$ where $\P=\Monh{k}{h}+\Span{x^h+\eta x^{k-1+t}}$.
The previous analysis provides.  
    
\begin{lemma}\label{3k bound}\label{2k+t bound}
Let $\C$ be an $(\bm{\a},\bm{v},\mathbf{t},\mathbf{h},\mathbf{\eta})$ $1$--TGRS code of dimension $k$ with $\mathbf{t}=(t)$ for some positive integer $t$.
Let $I \subseteq [n]$ such that $|I| < k$ and $\dim \C_I = k - |I|$.
Then,
\[
\dim \C^2 \leqslant
\begin{cases}
    3k-1 & \text{ for } t \ge k, \\
    2k+t & \text{ for } t < k.
\end{cases},
 \quad
\dim \C^2_{I} \leqslant \begin{cases}
    3(k -|I|)-1 & \text{ for } t \ge k - |I|, \\
    2(k - |I|) +t & \text{ for } t < k - |I|.
\end{cases}
\]

\end{lemma}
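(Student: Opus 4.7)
The plan is to obtain both inequalities as an immediate specialization of Corollary~\ref{cor:sq_TGRS} to the single-twist case $\ell = 1$, followed by a short case analysis on the three-term minimum. Since the hypothesis $|I| < k$ with $\dim \C_I = k-|I|$ matches exactly what the lemma assumes, no new structural argument is needed; everything reduces to comparing arithmetic expressions in $k$, $|I|$, and $t$.

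First I would substitute $\ell = 1$ and $t_{\max} = t$ into the three bounds of Corollary~\ref{cor:sq_TGRS}. The values $\frac{\ell(\ell-1)}{2} = 0$ and $\frac{\ell(\ell+1)}{2} = 1$ simplify the three candidates to
\[
    3(k-|I|) - 1, \qquad 2(k-|I|) + t, \qquad 2(k-|I|) + 2t - 1.
\]
Since $t \geq 1$ (as $t$ is a positive integer), the third term is at least the second, so the minimum is always realized by one of the first two. The bound for $\dim \C^2$ stated in the lemma is then recovered by taking $I = \emptyset$, and the bound for $\dim \C_I^2$ follows by keeping $I$ general.

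Next I would perform the case split on the relative size of $t$ and $k-|I|$. The difference between the first two candidates is $3(k-|I|)-1 - \bigl(2(k-|I|) + t\bigr) = (k-|I|) - t - 1$. Hence if $t \geq k-|I|$ the first candidate is the smaller one, yielding the bound $3(k-|I|) - 1$; and if $t < k-|I|$, that is $t \leq k-|I|-1$, the second candidate wins, yielding $2(k-|I|) + t$. This reproduces exactly the piecewise statements in the lemma.

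No real obstacle is anticipated here: the lemma is essentially a bookkeeping corollary of Corollary~\ref{cor:sq_TGRS}, the only mild care needed being to check that the third upper bound is indeed dominated by the second whenever $t \geq 1$, and to verify the integer inequality $t \leq k-|I|-1$ that turns the strict inequality $t < k-|I|$ into the form $2(k-|I|)+t \leq 3(k-|I|)-1$ used above.
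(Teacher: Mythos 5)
Your proposal is correct and follows essentially the same route as the paper: both specialize Corollary~\ref{cor:sq_TGRS} to $\ell=1$, $t_{\max}=t$ and then compare the resulting bounds. You are in fact slightly more explicit than the paper, which silently drops the third candidate $2(k-|I|)+2t-1$; your observation that it is dominated by $2(k-|I|)+t$ whenever $t\geq 1$ justifies that omission.
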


\begin{proof}
From Corollary~\ref{cor:sq_TGRS}, applied with $\ell = 1$ and $t_{max}= t$, we get
\[
    \dim \C_I^2 \leq \min \left\{3(k-|I|)-1,\ 2k+t-2|I| \right\}.
\]
This yields the proof.
\end{proof}

Next, we can redo the previous analysis in this case while making fewer approximations. We prove that the shortened code $\C_I$ has a square whose dimension differs from the typical one as soon as
\begin{equation}\label{eq:dist_range}\frac{3k-n}{2} \leq |I| \leq k-5.
\end{equation}
Thus, such a code is distinguishable from random as soon as
\[n-k {\geq 10}.\]
Since we supposed $k\leq  \frac n 2$, any $1$--qGRS code of length $>20$ is distinguishable from random.


\section{ A key-recovery attack on McEliece scheme with TGRS codes using Schur squares}\label{sec:5}
In this section we present a key recovery attack for the McEliece cryptosystem instantiated with a TGRS code (see Section~\ref{sec:McEliece}). Following the previous notations, let $\CC$ be the public code generated by $\Gpub$, a generator matrix of a $q$-ary single-twisted TGRS code $\mathcal{C}_{\bm{\a}, \bm{v}, \mathbf{t},\mathbf{h},\bm{\eta}} ^{n,k}  \subset \Fq^n$ of dimension $k$. 

Note that, unlike \cite{LR}, 
we do not impose any restriction on the parameters of the secret TGRS code. 


\subsection{Context and notation}

Let $\CC$ or equivalently, the $q$-ary single-twisted TGRS code
$\mathcal{C}_{\bm{\a}, \bm{v}, \mathbf{t},\mathbf{h},\bm{\eta}} ^{n,k}
$ have dimension $k \le \frac{n}{2}$ ({otherwise, one can work with
  the dual code of $\CC$}).

\begin{notation}
As $\CC$ has a single twist, we let $t, h$, and $\eta$ to be the twist, the hook, and the coefficient of $\CC$, respectively.
Let $\P = \MM  \oplus \Span{ x^h + \eta x^{k-1+t}}$, where $\MM = \Monh{k}{h}$. For a subset $I \subseteq [n]$, we define $p_I(x) = \prod\limits_{i\in I}(x - \a_i)$. 
\end{notation}
Recall that the injectivity of $\evav$ when restricted to $\Fqx_{<n}$
implies $ \CC = \evav(\P) $ and the shortening of $\CC$ at $I$ is
$\CC_I = \evav(\P_{p_I(x)})$. 
{Note that, $\MM$ is a
codimension 1 subspace of $\Fqxk$. By Lemma \ref{codim1 lemma}, $\MM_{p(x)}$ is a subspace of $p(x)\Mon {k-a}$ of codimension at most 1.}
 Here is an inclusion diagram for the involved polynomial spaces, which are isomorphic to involved codes via the map $\evav$.


\begin{figure}[H]
    \centering
    \includegraphics{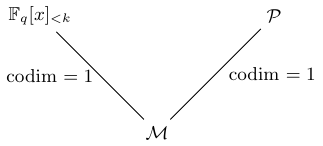}
    \quad\quad
    \includegraphics{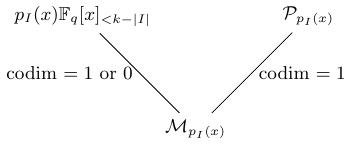}
\end{figure}
\subsection{Key-recovery algorithm}

We brief the main steps of our key recovery algorithm, which are given in detail in the following sections.
\begin{enumerate}
    \item[Step 1] Compute a basis of $\evav(\MM)$ or $\evav(\MM_{p_I(x)})$ using a distinguisher-based method (Section~\ref{keyrec:step1}).
    
    \item[Step 2] Use a Sidelnikov-Shestakov-like method given in \cite[\S~6.3]{CGGOT} to recover the secret evaluation vectors and multipliers for $\CC$ (Section~\ref{keyrec:step2}).
    
    \item[Step 3] Compute the other secret parameters which are the twist $t$, the hook $h$, and the coefficient $\eta$ (Section~\ref{keyrec:step3}).
    
\end{enumerate}
\subsubsection{Step 1: recovery of a codimension one subcode of the public (shortened) code.}\label{keyrec:step1}

\noindent

Following Lemma~\ref{3k bound}, we know that $\CC$
can be distinguished by considering its Schur square for the case $3k < n$ and otherwise by taking the Schur square of a shortened code $\CC_I$ with $(3k -n)/2< |I| < k -5$ (according to \eqref{eq:dist_range}). Now, depending on $t$, we consider the following two cases. 
\begin{enumerate}
    \item \label{case1} If $2k -2 +t < n$, then we compute a basis of the subcode $\evav(\MM)$ of $\CC$. 
    \item \label{case2} Otherwise, we compute a basis of the subcode  $\evav(\MM_{p_I(x)})$ of the shortened code $\CC_I$ for $I$ such that $2k -2 +t -n < |I|$.
\end{enumerate} 


   Note that in Sec. \ref{pol}, we work with polynomial spaces and to apply those results to the corresponding evaluation codes, we need $\evav$ to be injective. Therefore, the choice of parameters in Case~\ref{case1} has been made so that $\evav$ is injective on the product space $\MM \P$. In Case~\ref{case2}, to make sure the injectivity of $\evav$ on $\MM_{p_I(x)} \P_{p_I(x)}$, $I$ should be chosen such that 
   $2(k-|I|) -2 +t < n -|I|$, or equivalently, $2k + t -2 -n < |I|$. Injectivity of $\evav$ is crucial to apply the results on polynomial spaces, for instance, Proposition \ref{algorithm1} and Theorem \ref{main thm*} to the corresponding codes.
 
   First, we observe that, in Case~\ref{case1}, for a basis $\evav(e_1), \ldots, \evav(e_k)$ of $\CC$, if we take three random elements $\evav(f_1), \evav(f_2), \evav(f_3)$ from its monomial subcode $\evav(\MM)$, then the linear space spanned by
   \[\{\evav(f_i) \schur \evav(e_j): i = 1,2,3,\  1 \le j \le k\}\] has low dimension, more precisely bounded by $2k+2$. This is due to the following statement.
    \begin{proposition}\label{algorithm1}
Let $f_1,f_2,f_3  \in \MM$, then
\begin{align}
    \dim ( \Span{f_1,f_2,f_3} \P) \leq 2k+2.\label{algorithm1_inequality}
\end{align}
\end{proposition}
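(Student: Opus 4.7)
My plan is to exploit the direct sum decomposition $\P = \MM \oplus \Span{x^h + \eta x^{k-1+t}}$ set up just before the statement. Writing $V \eqdef \Span{f_1,f_2,f_3}$ (so $\dim V \leq 3$), distributivity of the polynomial product gives
\[
V \P \;=\; V \MM \;+\; V \Span{x^h + \eta x^{k-1+t}},
\]
so it suffices to bound each summand separately.

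The second summand is spanned by the three polynomials $f_i(x^h + \eta x^{k-1+t})$ for $i=1,2,3$, hence has dimension at most $\dim V \leq 3$. For the first summand, the naive product bound from Proposition~\ref{prop:dim_prod_space_poly} would only give $\dim V \cdot \dim \MM \leq 3(k-1)$, which is far too weak. The crucial observation is that $V \subseteq \MM \subseteq \Fqxk$, so every generator $f_i g$ with $g \in \MM$ has degree at most $2k-2$; combining this inclusion with Proposition~\ref{prop:prod_space_poly_bounded} yields $V \MM \subseteq \Fqxk \cdot \Fqxk = \Fqx_{<2k-1}$, a space of dimension $2k-1$.

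Adding the two bounds gives $\dim(V \P) \leq (2k-1) + 3 = 2k+2$, which is the claimed inequality. The argument is essentially one line once the decomposition is written down, so no genuine obstacle is expected. The conceptual point worth highlighting is that the bulk of $V \P$ is trapped in the low-degree polynomial space $\Fqx_{<2k-1}$, and only the twisted direction $\Span{x^h + \eta x^{k-1+t}}$ can push the product to degree $k-1+t$; this direction contributes at most $\dim V$ extra dimensions, which is exactly where the $+\,3$ in the bound $2k+2$ comes from.
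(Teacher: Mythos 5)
Your proof is correct and follows essentially the same route as the paper's: decompose $\Span{f_1,f_2,f_3}\P$ into $\Span{f_1,f_2,f_3}\MM + \Span{f_1c,f_2c,f_3c}$ with $c(x)=x^h+\eta x^{k-1+t}$, bound the first summand by $2k-1$ via the degree argument $\MM\subseteq\Fqxk$, and the second by $3$. No discrepancies to report.
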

\begin{proof}
  The result follows from the fact that $\MM \subset \Fqxk$ and 
  $\P \subseteq \MM \oplus \Span{c}$ for $c(x) = x^h + \eta x^{k-1+t}$.
  Thus,
  \[
    \Span{f_1,f_2,f_3}\P \subseteq \Span{f_1,f_2,f_3} \MM +
    \Span{f_1 c,f_2 c,f_3 c}.
  \]
  Since $f_1,f_2,f_3 \in \MM$ and $\MM \subseteq \Fq[x]_{<k}$, the first term of the right--hand side is contained in $\Fq[x]_{<2k-1}$, which yields the result.
\end{proof}
It is important to note that Proposition~\ref{algorithm1} {also holds in Case~\ref{case2}} if we replace $\MM$ by $\MM_{p_I(x)}$ and $\P$ by $\P_{p_I(x)}$. More importantly, if we consider $f_1, f_2, f_3$ randomly from $\P$, then due to the higher degree term $x^{k-1+t}$, the space
$\Span{f_1,f_2,f_3}\P$
is very likely to have a much larger dimension than $2k+2$. This is indeed the case, as we prove the following result in Section~\ref{pol}.
Note that in Section~\ref{pol}, we consider the case where $\MM_{p(x)}$ has codimension 1 in  $p(x)\Mon {k-a}$. For the case $\MM_{p(x)}=p(x)\Mon {k-a}$, the probabilistic estimates actually become even  more favorable.

\begin{theorem}\label{main thm*}
Let $n,k,t$ be integers such that $17\leqslant k < n$ and $17 \leqslant t \le n-k$. Suppose $\R = \E \oplus \Span{c(x)}   \subseteq \Fq[x]$, where $\E \underset{\text{codim }1}{\subseteq} \Fqxk$ and $c(x)$ has degree $k -1 +t$.
    If $f_1,f_2,f_3\in \R$ and satisfy
    $$\dim( \Span{f_1,f_2,f_3}\R )\leqslant 2k+2,$$ then
    $f_1,f_2,f_3\in \E$ with probability $\geqslant 1-\frac{1}{q^2}$. Furthermore, a basis of $\E$ is recovered with probability $ \geqslant (1 - \frac{(k-3)}{q^2})$.
\end{theorem}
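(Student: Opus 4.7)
The plan is to use the direct sum $\R = \E \oplus \Span{c(x)}$ to decompose each $f_i = e_i + \lambda_i c$ with $e_i \in \E$ and $\lambda_i \in \Fq$, and to show that the bound $\dim(\Span{f_1,f_2,f_3}\R) \leq 2k+2$ forces all three $\lambda_i$ to vanish except on an event of probability at most $1/q^2$. The first step is a basis change: if some $\lambda_i \neq 0$, say $\lambda_1 \neq 0$, then replacing $f_j$ by $f_j - (\lambda_j/\lambda_1)f_1$ for $j=2,3$ leaves both $V \eqdef \Span{f_1,f_2,f_3}$ and $V\R$ unchanged, and reduces us to the case $f_2, f_3 \in \E$ and $f_1 = e_1 + \lambda c$ with $\lambda \neq 0$.

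Expanding $V\R = V\E + Vc$ in this normal form then yields the structural decomposition
\[
V\R = (e_1 \E + f_2 \E + f_3 \E) + c\E + \Span{c^2},
\]
since $f_1 \E = e_1 \E + \lambda c \E$ pulls $c\E$ entirely into $V\R$, while $f_1 c \equiv \lambda c^2 \pmod{c\E}$ adds the top summand. Using $k \geq 17$ and $t \geq 17$, I would then separate degrees: the first summand lies in $\Fqx_{<2k-1}$, the second in degrees $[k-1+t,\, 2k-2+t]$, and $c^2$ occupies the isolated top degree $2k-2+2t$. For $t \geq k$ the three summands are in direct sum, giving the clean lower bound
\[
\dim V\R \geq \dim(e_1 \E + f_2 \E + f_3 \E) + k,
\]
and for $t < k$ the same lower bound holds up to a controlled overlap loss depending on $k - t$.

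The technical heart of the proof is then a probabilistic estimate: conditional on some $\lambda_i \neq 0$, the probability that $\dim(e_1 \E + f_2 \E + f_3 \E) \leq k+2$ is at most $1/q^2$. For generic $(e_1, f_2, f_3) \in \E^3$ this sum fills essentially all of $\E\E$, whose dimension is close to $2k-1$; a collapse below $k+2$ would force severe divisibility coincidences among $e_1, f_2, f_3$ interacting with the distinguished ``missing monomial'' position that defines $\E \subset \Fqx_{<k}$. Enumerating these degenerate configurations in $\E^3$ and bounding their count by a $1/q^2$ fraction is precisely what Section~\ref{pol} carries out, and this is the main obstacle of the proof: converting the ``generic full dimension'' heuristic into a sharp $1/q^2$ bound requires a careful combinatorial analysis sensitive to the hook $h$ that cuts out $\E$ inside $\Fqx_{<k}$.

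For the basis-recovery statement, I would iterate the procedure: starting from one random triple (from which, upon acceptance, three independent elements of $\E$ are extracted), then in subsequent rounds reusing two already-known elements of $\E$ as probes and drawing only one new random element $f \in \R$ per round. A careful bookkeeping shows that $k-3$ successful acceptances suffice to produce a spanning set of $\E$, and a union bound over the per-round failure probability of $1/q^2$ yields the claimed probability $\geq 1 - (k-3)/q^2$ of recovering a basis.
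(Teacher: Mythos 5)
Your overall strategy (normalize so that only $f_1$ carries a component along $c$, peel off the high-degree contribution, and reduce to a probabilistic dimension count over $\E$) follows the same broad shape as the paper's argument, but two of your key steps do not hold as stated. First, the decomposition $V\R=(e_1\E+f_2\E+f_3\E)+c\E+\Span{c^2}$ is only an inclusion $\subseteq$, not an equality: $f_1\E=\Span{(e_1+\lambda c)e : e\in\E}$ has dimension $k-1$, whereas $e_1\E+\lambda c\,\E$ can have dimension up to $2(k-1)$, and nothing forces $c\E\subseteq V\R$ because $e_1\E\not\subseteq V\R$ in general. An upper containment cannot produce the lower bound $\dim V\R\geq\dim(e_1\E+f_2\E+f_3\E)+k$, and that bound is in fact false for some triples (e.g.\ $f_2=f_3=0$ and $e_1\neq0$ gives $\dim V\R=k$ while the right-hand side equals $2k-1$). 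The correct peeling, as in the paper, uses only $f_1\E\subseteq V\E\subseteq V\R$: when $t\geq k$ the $k-1$ products $f_1e_i$ have distinct degrees $\geq 2k-1$ and hence contribute $k-1$ dimensions on top of $f_2\E+f_3\E\subseteq\Fqxk[x]_{<2k-1}$ (only $t-1$ or $t$ dimensions when $t<k$). The quantity to control is therefore $\dim(f_2\E+f_3\E)$, which the paper handles through the exact formula $\dim(f\,\Fqxk+g\,\Fqxk)=k+\max\{\deg f,\deg g\}-\deg\gcd(f,g)$ followed by a count of pairs with small gcd; your appeal to ``severe divisibility coincidences'' gestures at this mechanism but you never supply the counting, which is the technical heart of the proof.

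Second, and more fundamentally, your quantitative target is too weak to imply the theorem. Even granting a conditional collapse probability of $1/q^2$ given that some $\lambda_i\neq0$, Bayes' rule gives
$\Pr[\E^3\mid \dim\leq 2k+2]\geq q^{-3}/(q^{-3}+q^{-2})\approx 1/q$,
since $\E^3$ occupies only a $q^{-3}$ fraction of $\R^3$; this tends to $0$, not to $1-q^{-2}$. To conclude, the set $\Gamma$ of bad triples outside $\E^3$ must have density roughly $q^{-5}$ in $\R^3$. The paper obtains exactly this by showing that a random pair in $\E\times\E$ violates the required gcd condition with probability at most $q^{-5}$ (this is where the hypotheses $k,t\geq17$ and the intermediate densities $1-q^{-7}$ and $1-q^{-5}$ enter), and then computes $|\E^3|/(|\E^3|+|\Gamma|)\geq q^{-3}/(q^{-3}+q^{-5})\geq 1-q^{-2}$. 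Your closing union bound for the basis-recovery statement does match the paper's.
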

    
\begin{remark} Let us explain why exactly three functions are chosen in Theorem~\ref{main thm*} and in Algorithm~\ref{algorithm}.
If we draw $s$ elements of the public at random code, the probability that each one lies in some fixed codimension one subcode is $q^{-s}$. Hence, in average, we will have $O(q^s)$ trials to perform before succeeding. Thus, the larger the $s$ the larger the complexity of the attack.
Next, observe that with $s = 2$ we cannot distinguish elements of the subcodes since for any code
$\mathcal{D}$
\[\dim \langle f_1,f_2\rangle*\mathcal{D} \leq 2 \dim \mathcal{D} - 1 < 2 \dim \mathcal{D} +2.\] Thus $s=3$ is the least number of elements that yields a successful attack. Because of the aforementioned complexity issues there is no interest in considering a higher $s$.
\end{remark}
  Now, we apply Theorem~\ref{main thm*} by taking $\R = \P$ and $\E=\MM$ (Case \ref{case1}) or $\R=\P_{p_I(x)}$ and $\E=\MM_{p_I(x)}$ (Case \ref{case2}) to ensure the isomorphism between the polynomial spaces and the corresponding codes via the evaluation map $\evav$. 
  Based on the above theorem, which is proved in Section~\ref{pol}, we recover the codimension 1 subcode $\evav(\MM)$ of $\CC$ (in Case~\ref{case1}) or $\evav(\MM_{p(x)})$ of $\CC_I$ (in Case~\ref{case2}) with probability at least $1 - \frac{k-3}{q^2}$ using a distinguisher based method explained in Algorithm~\ref{algo:Clambdaperp}. This algorithm was introduced in \cite[Algorithm 1]{CGGOT} in a quite different version: we adapted it to the framework of our attack. Note that the parameter $k$ in Algorithm \ref{algo:Clambdaperp} refers to the dimension of $\R$. Therefore, when we apply the algorithm, it will be for $k$ in Case \ref{case1} and for $k-|I|$ in Case \ref{case2}.

  The recovery algorithm (Algorithm~\ref{algo:Clambdaperp}) proceeds as follows.
  
  \begin{description}
   \item[\textbf{In Case \ref{case1}:}]
     \begin{enumerate}
         \item Draw random triples $(\cv_1,\cv_2,\cv_3) \in \CC^3$ until we find one such that
         \[
         \dim (\Span{\cv_1, \cv_2, \cv_3} \CC) \leq 2k+2.
         \]
         According to Theorem~\ref{main thm*} such a triple is likely to be in $\evav(\MM)$.
         Since $\MM$ has codimension $1$ in $\CC$, the probaility of finding such a triple
         is $\frac{1}{q^3}$ and hence in $O (q^3)$ trials, we should find such a triple.
         \item  We collect other elements of $\evav(\MM)$ until we get a basis of this monomial subcode. For this sake, 
         draw $\cv \in \CC$ and keep it if
         \[
         \dim (\Span{\cv_1, \cv_2, \cv} \CC) \leq 2k+2.
         \]
         Such a $\cv$ is found in $O(q)$ trials.
     \end{enumerate}
\item[\textbf{In Case \ref{case2}:}] We do the same while replacing $\CC$ by $\CC_I$ and $\MM$ by $\MM_{p_I}$. Next, we iterate this process by choosing subsets $I_1, \ldots, I_s \subset [n]$ satisfying  $\bigcap\limits_{i=1}^s I_i = \emptyset$ and $(3k-n)/2 < |I_i| < k-5$ for all $i = 1, \ldots, s$. The last condition being taken to recover the codimension 1 subcodes $\evav(\MM_{p_{I_i}})$ of $\CC_{I_i}$.
     We collect shortenings $\evav (\MM_{p_{I_i}})$ and sum them up until the sum has dimension $k-1$. This permits to recover the monomial subcode $\evav(\MM)$.
     
     This recovery of a subcode by summing up shortenings is already used 
     and discussed in \cite[\S~IV.F]{COT17}.
\end{description}
In summary, whatever the parameters $k,t$, in the end of this step we have a basis of
$\evav(\MM)$.

\begin{algorithm}[t] 
  {\bf Input  : } A basis $\{\cv_1,\ldots ,\cv_{k}\}$ of $\C$ \\
  {\bf Output : } A basis $\mathcal{B}$ of $\C_0$.

  \begin{algorithmic}[1]
  \REPEAT \label{re}
   \FOR{$1 \leqslant  i \leqslant  3$}
   \STATE{Randomly choose $\bv_i$ in $\C$}
   \ENDFOR
   \STATE{ $\mathscr{D} \leftarrow ~ \langle \big\{ \bv_i \schur \cv_j ~|~ 1 \leqslant i \leqslant 3 \text{~and~} 1 \leqslant j
  \leqslant k \big \} \rangle $}
  \UNTIL{$\dim(\mathscr{D}) \leqslant  2k+2$ and $\dim ( \langle \bv_1, \bv_2, \bv_3 \rangle ) =3 $}
  \STATE{$\mathcal{B} \leftarrow \{\bv_1,\bv_2,\bv_3\}$}
  \STATE{$s \leftarrow 4$}
  \WHILE{$s \leqslant  k-1$}
    \REPEAT
  \STATE{Randomly choose $\bv_s$ in $\C$}   
  \STATE{$\mathcal{E} \leftarrow ~ \langle \big\{ \bv_i  \schur \cv_j ~|~ i \in \{1,2,s\} \text{~and~} 1 \leqslant j
  \leqslant k \big \} \rangle $}
  \UNTIL{$\dim(\mathcal{E}) \leqslant  2k+2$ \AND $\dim \left(\langle \mathcal{B} \cup \left\{
      \bv_s \right \} \rangle \right) = s $}
  \STATE{$\mathcal{B} \leftarrow \mathcal{B} \cup \{\bv_s\}$}
  \STATE{$s \leftarrow s+1$}
  \ENDWHILE
    \RETURN{$\mathcal{B}$;}
  \end{algorithmic}
  \caption{\label{algo:Clambdaperp}Recovering $\C_0$.}
    \label{algorithm}
\end{algorithm}

\subsubsection{Step 2: recovery of the secret evaluation vector and the multipliers.} \label{keyrec:step2}
Now we have access to the space $\evav(\MM)$. Since $\MM \subseteq \Fq[x]_{<k}$ and
we assumed $k \leq \frac n 2$, we see that $\MM^2 \subseteq \Fq[x]_{<2k-1}$ and hence
$\evav (\MM) \varsubsetneq \Fq^n$. Moreover, $\MM  = \Span{1,x, \dots, \widehat{x^h}, \dots, x^{k-1}}$. A classical argument from combinatorics permits to prove that if $h \neq 1$
or $k-2$, then $\MM^2 = \Fq[x]_{<2k-1}$. In such a situation, $\evav(\MM)^2$ is a GRS
code, and then the Sidelnikov--Shestakov attack can be applied to it as it is done in \cite{Wie10}.

This permits to recover a candidate for the pair $(\aav, \vv)$.

\begin{remark}
We did not consider here the pathological cases $h = 1$ or $k-2$. They should be subject to a separate development that we do not treat in this article.
\end{remark}

    
   
   \subsubsection{Step 3: recovery of the hook $h$, the twist $t$, and the coefficient $\eta$ of the secret key.} \label{keyrec:step3}
   The previous steps provide $\bm{\a}$ and $\bm{v}$ which we use to recover the hook $h$, the twist $t$ and the coefficient $\eta$. 
   \begin{description}
\item[\textbf{Recovering $\bm{h}$.}] For $0 \le i \le k-1$, we check if $\bm{v} \schur \bm{\a}^i \in \CC$ or not. Our assumption on $\G_{pub}$ implies that there is exactly one $i$ for which $\bm{v} \schur \bm{\a}^i \notin \CC$, and that integer would be the hook $h$.
   
\item[\textbf{Recovering $\bm{t}$.}]
    Take a codeword $\bm{c}\in \CC \setminus \evav(\MM)$. By interpolation we get $f \in \Fq[x]_{<n}$ such that $c = \evav(f)$ and $\deg(f) = k-1+t$ for some positive $t$.
    Which yields $t$.
    
\item[\textbf{Recovering $\bm{\eta}$.}]
Still considering the same $f$ and denoting by $f_i$ its coefficient of degree $i$
then $\eta$ satisfies $f_{k-1+t} = \eta f_h$, which permits to deduce $\eta$.
   \end{description}

\subsubsection{Attacked parameters (with a proof)}
We provide the sets of parameters in Table~\ref{tab:side}, which includes the TRS codes considered in \cite{LR} for the single twist case (labeled as [LR]) as well as the parameters for the provable attack discussed in this article (labeled as [CPTZ]).

We emphasize that,
\begin{itemize}

\item In the table, we consider only parameters for which Theorem~\ref{main thm*} holds. Note that, in Case \ref{case2} (as mentioned in the beginning of Subsection~\ref{keyrec:step1}), we can apply Theorem \ref{main thm*} if $k - |I| < 17$. Thus considering the conditions $2k+t -2 -n < |I| \le k - 17$, we get $t < n-k-15$. 
  

\item \cite{LR} restricts to a limited subclass of TRS codes while we consider the much broader class of TGRS codes (the attack can be extended to qGRS codes with no difficulty).
\begin{itemize}
    \item $q = q_0^2$ and $k < n \le q_0 -1$ where $q_0$ is a prime power,
    \item $\frac{n}{3} + \sqrt{n}+\frac{1}{3} < k \le \frac{n}{2}-2$ (the left hand side inequality comes from the lower bound for $\ell$ in the proposed set of parameters), 
    \item $r = \lceil \frac{n+1}{3}\rceil + 2$,
    
    \item $t = 2 \lceil \frac{n+1}{3}\rceil - k +2$ and thus $t \ge \frac{2n}{3} - \frac{n}{2}+4 = \frac{n}{6} + 4$.
    
    \item $h = r$ which implies that $h \neq 1, k-2$.
\end{itemize}
\end{itemize}

\begin{table}[H]
\centering
    \begin{tabular}{l|cc}\toprule
         & [LR]\quad & [CPTZ] \\
        \midrule
        $q$ \quad & $n^2$ \qquad& \quad  $n$  \vspace{0.2cm} \\
        \vspace{0.2cm}
        $k\in $ \quad $ $ & $ $\quad $\left[\frac{n}{3} + \sqrt{n}+\frac{1}{3},  \frac{n}{2}-2\right]$\qquad & \quad$[\sqrt{2n}, n-14]$\\
        \vspace{0.2cm}
        $t$ \quad & $2 \lceil \frac{n+1}{3}\rceil - k +2$ \qquad&\quad $ [17, n-k -16]$\\
        \vspace{0.2cm}
        $h$ \quad &$\lceil \frac{n+1}{3}\rceil + 2$ \qquad& \quad$\neq 1, k-2$\\
        \bottomrule
    \end{tabular}
    \bigskip
    \caption{The parameters for provable attacks in the case of single twist}%
\label{tab:side}
\end{table}

We mention that in \cite{BBPR}, $t \ge \frac{2n}{3} - \frac{n}{2}+4 = \frac{n}{6} + 4$. Thus, $t < 17$ will imply $n <78$, which is really small compared to the practical parameters. 

\subsection{Complexity}
According to \cite[Prop.~2]{COTG15}, the complexity of computing a Schur square is $O(n^4)$. Next, the bottelneck of the attack is the first step with the recovery
of the first triple $(\cv_1, \cv_2, \cv_3) \in \evav({\MM})^3$.
 We claim that this step
is the dominant one in the running of the attack. Since, this step requires
$O(q^3)$ trials this yields an overall complexity of $O(q^3 n^4)$ operations in $\Fq$.

\subsection{Some comments about the non covered cases}\label{sec:pathology}

We observe the cases where the key-recovery attack might fail: 
\begin{itemize}
    \item if $h = 1 $ or $k-2$, the square of the codimension 1 subcode we recover in Step 1, might not have Schur square equal to $\GRS_{2k-1}$. 
    \item for $k,t < 17$, %
    Theorem~\ref{main thm*} is no longer valid.
\end{itemize}
Here we give some ideas on alternative ways to recover a valid secret key without going into detailed description.

\subsubsection{When $t$ is small.}
We consider in particular the case where $t < k$ and that we still assume that
$k < \frac n 2$. Recall that $\C = \evav({\P}$), where
$ \P = \MM \oplus \Span{c(x)}$ with
\[
\MM = \Span{1, x, \ldots, \widehat{x^h}, \ldots, x^{k-1}}
\]
for some $h \le k-1$ and $\deg(c)=k-1+t$.

Suppose first that $2k+t < n-2$. 
In this situation, observe that:
\[
\P^2 = \MM^2 + \c \MM + \Span{c^2}.
\]
Since $\deg(c) < k+t < 2k$ then $\MM^2 + \c \MM \subseteq \Fq[x]_{2k+t}$ and it is very likely that this inclusion is an equality. In this case, 
$\C^2$ is the direct sum of a GRS code and the one-dimensional space $\evav(\Span{c^2})$.
In this situation, taking the dual of $\C^2$ yields a codimension $1$ subcode of a GRS code whose structure can be recovered using Wieschebrink's attack \cite{Wie10}.

When $2k+t > n$, a similar approach can be applied to shortenings of the code.





\section{Implementation}\label{sec:implem}
The first part of the attack, that is to say the computation of the monomial subcode (Step 1, Subsection~\ref{keyrec:step1}) is implemented in the computer algebra system SageMath v9.5 \cite{sage}.
It is available on \href{https://github.com/nihantanisali/TGRS}{https://github.com/nihantanisali/TGRS}.
Since the rest of the attack consists in applying Sidelnikov Shestakov attack together with other classical routines we considered that checking the validity of Step 1 was sufficient.
Note that our implementation is a proof of concept that is far from being optimized. Hence, due to time and resource constraints we chose smaller $q$ and $k$. The results are summarized in Table \ref{table imp}. 
Even though these small-scale instances do not fully capture the computational challenges posed by larger, standard parameters, they demonstrate that our method works in practice.

\begin{table}[h!]
\centering
\scalebox{0.805}{\label{table imp}
\begin{tabular}{||p{0.8cm}|p{0.8cm}|p{0.8cm}|p{0.8cm}|p{0.8cm}|p{0.8cm}|p{1cm}||p{1.8cm}||}
\hline
$q_0$ & $n$ & $k$ & $t$ & $l$ & $h$ & $a=|I|$ & Runtime (s) \\
\hline
$53$ & $51$ & $17$ & $17$ & $1$ &   $8$ & $-$ & 240 \\
$53$ & $51$ & $19$ & $8$  & $1$ &   $8$ & $-$ & 261 \\
$61$ & $51$ & $25$ & $17$ & $1$ & $19$& $8$ &  3834 \\
\hline
\end{tabular}}
\medskip
\caption{Experimental results obtained by averaging several runtimes 
         of Algorithm~\ref{algorithm} on a 13th Gen Intel® Core™ i7-13800H × 20.}
\end{table}
\section{A result on the polynomial space}\label{pol}
The aim of this section is to provide a proof of Theorem~\ref{main thm*}.
We will consider a polynomial subspace
$$ \R = \E \oplus \Span{c(x)}   \subseteq \Fq[x],$$ 
where $ \E \underset{\text{codim }1}{\subseteq} \Fqxk$, and $c(x) \in \Fq[x]$ such that $\deg(c)=k-1+t $ with $t > 0$. 
We will determine the quantity 
\begin{align*}
    \dim \< f_1, f_2, f_3\> \R
\end{align*}
for different choices of $f_1,f_2,f_3 \in \R $ depending on admissible values of $t$ and $k$, which will be specified later. 

First, recall that Proposition~\ref{algorithm1} yields an upper bound on $\dim \< f_1,f_2,f_3 \>\R$ for $f_1,f_2,f_3\in \E$:
\[ \dim \< f_1, f_2, f_3\> \R\leqslant 2k+2 .\] 
Then, we study the general case of $f_1,f_2,f_3\in \R$
in order to prove Theorem \ref{main thm*}. First we define the following sets. 
\begin{align}
    \Psi&\eqdef\{(f_1,f_2,f_3)\in \R^3 : \dim \< f_1, f_2, f_3\> \R > 2k+2 \};\label{psi}\\
    \Gamma&\eqdef \{(f_1,f_2,f_3)\in \R^3\setminus\E^3 :\dim \< f_1, f_2, f_3\> \R \leqslant  2k+2 \}. \label{gamma}
\end{align}
We get a partition of $\R^3= \E^3 \sqcup \Gamma \sqcup \Psi$ as illustrated in Figure~\ref{Fig.1}. 
\begin{figure}[H]
    \centering
    \includegraphics{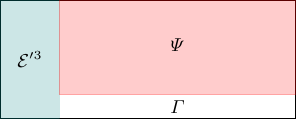}
%
     \caption{Illustrates the partition $ \E^3 \sqcup \Psi \sqcup \Gamma $ of $ \R^3 $}
     \label{Fig.1}
\end{figure}
Our main goal in this section is to prove that if $ (f_1,f_2,f_3) \in \R^3$ with $ \dim\< f_1, f_2, f_3\> \R \leqslant 2k+2$, then  $(f_1,f_2,f_3)\in \E^3$ with probability $\geqslant 1-\frac{1}{q^2}$. This probability corresponds to showing that
\begin{equation} \frac{|\E^3|}{| \E^3 \sqcup \Gamma|  }  \geqslant 1-\frac{1}{q^2}.\end{equation}

 Let $f_1, f_2, f_3 \in \R$. If $f_i \notin \E$ for some $1 \le i \le 3$, then w.l.o.g. (after taking linear combinations), we can assume that the polynomials are of the form
\begin{align*}
    f_1(x) &:= a_0+a_1 x+\cdots +a_{k-1}x^{k-1} + a_{k-1+t} c(x), \quad a_{k-1+t} \neq 0,\\
    f_2(x) &:= b_0+b_1 x+\cdots +b_{k-1}x^{k-1}, \\
    f_3(x) &:= c_0+c_1 x+\cdots +c_{k-1}x^{k-1}.
\end{align*}
First, note that
$
        \dim \< f_1, f_2, f_3\> \R \geqslant   \dim  \< f_1, f_2, f_3\> \E. 
$
We now focus on  $\dim  \< f_1, f_2, f_3\> \E$. 
Clearly, the dimension of the space $f_1 \E$ is $k-1$. We now fix a basis $\{e_1(x), \ldots , e_{k-1}(x)\}$ of $ \E$ such that $\deg(e_1)< \cdots < \deg(e_{k-1}).$

 For all polynomials $ e_i(x)\in \E $, we have that 
 \[k -1+t \leqslant \deg(e_i(x)f_1(x)) \leqslant 2k-2 +t.\]
 We define the subspace of $f_1\E$ generated by $e_i$'s of \textit{large degrees}, denoted by $(f_1\E)_l$, as follows:   
$$ (f_1\E)_l \eqdef \Span {f_1 e_ i ~\colon~ {\deg(f_1)+\deg(e_i) \ge 2k-1 } }.  $$
We have 
\begin{align}\dim(( f_1\E)_l ) =\begin{cases}
k-1 \quad  \text{ if $t \ge k $},\\
t-1 \text{   or  } t   \quad\text{ if $t < k $}.
\end{cases}  \label{t>k large polynomials}\
\end{align}
{For $t<k$ in the above \eqref{t>k large polynomials}, $\dim(( f_1\E)_l )$ has two possibilities depending on the missing integer $t_1 \in \{0,\dots, k-1\}\setminus \{\deg(e_i) ~\colon~ i=1, \ldots, k-1\}$. Indeed, if $k-1+t+t_1 \ge 2k -1$, then $\dim(( f_1\E)_l ) =t -1$ or $t$, otherwise. }
Using the fact that the space $ f_2  \E+ f_3 \E $ contains polynomials up to degree $2k-2$ (as $\deg(f_2),\deg(f_3) \leqslant k-1 $), we get the equalities
\begin{align}\label{combined}
    \dim  \< f_1, f_2, f_3\> \E =
    \left\{
    \begin{array}{lc}
       k-1 + \dim  \< f_2, f_3\> \E & \text{ if $t\ge k$,}\\
    \left. 
        \begin{array}{c}
          t-1 + \dim  \< f_2, f_3\> \E  \\ \text{ or, }\ \  t+ \dim  \< f_2, f_3\> \E  
        \end{array}
    \right\}     &  \text{if}\ t < k
    \end{array}
  \right.
\end{align}

We now focus on estimating $\dim  \< f_2, f_3\> \E$ where $f_2$ and $f_3$ are chosen uniformly at random from $\E$. 
For this, we use a classical result on polynomials that we prove for the sake of self containedness.
    
\begin{lemma}\label{sum}
     Let $f, g \in \Mon k$ be two polynomials.  
     Then 
    \[
    \dim(f \Mon k + g \Mon k ) = k+ \max\{\deg f,\deg g\} -\deg (\gcd(f,g)).
    \]
\end{lemma}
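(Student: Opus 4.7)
The plan is to use the standard dimension formula
\[
    \dim(f \Mon k + g \Mon k) = \dim(f \Mon k) + \dim(g \Mon k) - \dim(f \Mon k \cap g \Mon k).
\]
Since $f, g \in \Mon k$ are nonzero (if either vanished the statement would be trivial), multiplication by $f$ (resp.\ $g$) is an injective $\Fq$-linear map on $\Mon k$, so the first two summands each equal $k$. The whole content of the lemma therefore reduces to computing the dimension of the intersection $f \Mon k \cap g \Mon k$.

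To handle that intersection, I would introduce the coprime factorization. Let $d \eqdef \gcd(f,g)$ and write $f = d f'$, $g = d g'$ with $\gcd(f', g') = 1$. An element of $f \Mon k \cap g \Mon k$ has the form $fp = gq$ with $p,q \in \Mon k$, which after cancellation of $d$ becomes $f' p = g' q$. Coprimality of $f'$ and $g'$ forces $g' \mid p$, so $p = g' r$ for some $r \in \Fq[x]$ (and correspondingly $q = f' r$). Conversely, any $r \in \Fq[x]$ yields an element $d f' g' r$ of the intersection, provided $p = g' r$ and $q = f' r$ both lie in $\Mon k$. This gives a bijective linear correspondence between $f \Mon k \cap g \Mon k$ and the space of admissible $r$'s.

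The admissibility conditions $\deg(g' r) < k$ and $\deg(f' r) < k$ translate into the single condition $\deg r < k - \max\{\deg f', \deg g'\}$. Since $\max\{\deg f', \deg g'\} = \max\{\deg f, \deg g\} - \deg d$ and is strictly less than $k$, this space of $r$'s is nonzero and has dimension exactly $k - \max\{\deg f, \deg g\} + \deg d$. Substituting back into the dimension formula then yields
\[
    \dim(f \Mon k + g \Mon k) = 2k - \bigl(k - \max\{\deg f, \deg g\} + \deg d\bigr) = k + \max\{\deg f, \deg g\} - \deg \gcd(f,g),
\]
as required. The only mildly delicate point is checking that the bijection $r \leftrightarrow f g' r$ is well-defined and injective, which follows from $f, g' \neq 0$ and the fact that $\Fq[x]$ is an integral domain; everything else is bookkeeping on degrees.
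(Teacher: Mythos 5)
Your proof is correct and follows essentially the same route as the paper: the coprime factorization $f = df'$, $g = dg'$ and the observation that coprimality forces $g' \mid p$ is exactly the paper's key step, with your inclusion–exclusion on $\dim(f\Mon k \cap g\Mon k)$ being the canonical mirror image of the paper's rank–nullity computation of $\ker\bigl((a,b)\mapsto af+bg\bigr)$ (the two spaces are isomorphic via $(a,b)\mapsto af$). The degree bookkeeping matches as well, so nothing further is needed.
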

\begin{proof}
  Set $s = \deg f$, $u = \deg g$, $h = \gcd(f,g)$ and $d = \deg h$.
    W.l.o.g. one can assume $s=\max\{s,u\}$ and we set
     $f_1 (x)  = f(x) / h(x)$ and $g_1(x) = g(x) / h(x). $ 
     Consider the linear map
     \[
        \phi : \map{\Fq[x]_{<k} \times \Fq[x]_{<k}}{f\Fq[x]_{<k}+g\Fq[x]_{<k}}{(a,b)}{af+bg.}
     \]
     Any $(a,b) \in \ker \phi$ satisfies $af = -bg$. By dividing both sides by $h$
     we get $af_1 =- bg_1$ and since $f_1, g_1$ are coprime, we deduce that there exists $p \in \Fq[x]$ such that
     $a = g_1 p$ and $b = -f_1 p$.
     Since $a,b$ have degrees $<k$, we deduce that $\deg f_1 + \deg p = 
     s-d + \deg p < k$.
     Therefore,
     \[
        \ker \phi = \{(g_1 p, -f_1 p) ~:~ p \in \Fq[x]_{<k-s+d}\}.
     \]
     Thus, $\dim \ker \phi = k-s+d$ and the rank--nullity theorem permits to conclude.
     \end{proof}

Observe that $\dim \< f_2, f_3\> \E \ge \dim (f_2\Fqxk + f_3\Fqxk) - 2$. Now, \eqref{combined} implies that in order to get $\dim \<f_1, f_2, f_3\> \E > 2k+2$
it suffices to have
\begin{align}
    \dim\< f_2, f_3\> \E > \begin{cases}
    k+3 \quad \text{if } t \ge k,\\
    2k -t+3 \quad \text{if } t < k.
    \end{cases}
\end{align}
As a straightforward consequence of Lemma \ref{sum}, we reduce the dimension conditions to a condition on the degree of $\gcd(f_2,f_3)$. We give the condition more generally for any pair of polynomials $(f,g) \in \Mon k\times \Mon k$ and we call it the \emph{gcd condition} as defined below.
\begin{definition}\label{def:gcd_cond}
    We say that a polynomial pair $(f,g)\in \Mon k\times \Mon k $ with degrees $s $ and $u$  \emph{satisfies the gcd condition} if 
\begin{equation}\label{gcd_cond}
     \deg(\gcd(f,g)) < \begin{cases}
         \max\{s,u\} -5 & \text{ if $t>k$}\\
         \max\{s,u\} -(k-t) -5 &\text{ if $ t\leqslant k$.}
     \end{cases}
\end{equation}
\end{definition}


\begin{remark}\label{reduction}
 {Note that the bounds on the degree of the gcd in \eqref{gcd_cond} are taken so that, if a pair $(f_2,f_3) \in \E \times \E$ satisfies the gcd condition in Definition \ref{def:gcd_cond}, then $(f_1, f_2, f_3) \in \Psi$ for $f_1 \in \R \setminus \E$ (see \eqref{psi} for the definition of $\Psi$). Therefore, it helps to obtain a bound on the size of $\Psi$, that we determine in the subsequent part.}
\end{remark}
We aim to find the number of pairs $(f_2,f_3) \in \Mon k \times \Mon k$ that satisfy the gcd condition given in Definition \ref{def:gcd_cond}. For that, we introduce the required notations and notions.

\begin{definition}\label{newnotations}
Let  $i,j, s,u $ be non-negative integers such that $s,u <k$. Define
\begin{enumerate}
    \item $\mathcal{G} :=\{(f,g) \in \Fqxk \times \Fqxk : (f,g) \text{ satisfies the gcd condition} \}$;
    \item $\mathcal{G}_{\E}:=\{(f,g) \in \E\times \E :  (f,g) \text{ satisfies the gcd condition} \}$;
    \item $\mathcal{A}(s,u ):=\{( f,g) :\deg (f)= s ,~ \deg (g)=u \}$;
    \item $\mathcal{B}(s,u ,i):=\{( f,g)  :\deg (f)= s , ~\deg (g)=u,~ \deg(\gcd(f,g))  = i\}$;
    \item $  \mathcal{B}_j(s,u ):=\{( f,g)  :\deg (f)= s ,~\deg (g)=u ,~ \deg(\gcd(f,g)) \le j\}$.
\end{enumerate}
\end{definition} 

In the rest of this section, we will make use of the following result in our counting arguments.
\begin{theorem}[{\cite[Theorem 3]{BenBen07}}]\label{gcd theorem}
    Let  $f$ and $g$ be randomly chosen from the set of polynomials in $\F_q[x]$ of degree $ s$  and $u$ respectively, where $ s$  and $u$   are not both zero. Then the probability of $f$ and $g$ being coprime is $1-\frac{1}{q}$.
\end{theorem}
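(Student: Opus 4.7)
The plan is to prove this by a generating--function argument built on factoring out the greatest common divisor. For integers $s, u \geq 0$ not both zero, let $P(s,u)$ denote the number of ordered pairs of coprime monic polynomials of degrees exactly $s$ and $u$, with the convention $P(0,0) = 1$. First I would establish the fundamental identity
\[
q^{s+u} \;=\; \sum_{i=0}^{\min(s,u)} q^i \, P(s-i,\, u-i).
\]
This holds because every pair of monic polynomials $(f,g)$ of degrees $(s,u)$ admits a unique factorisation $(f,g)=(d f',\,d g')$ where $d=\gcd(f,g)$ is monic of some degree $i\leq \min(s,u)$ and $(f',g')$ is a coprime monic pair of degrees $(s-i,u-i)$; the factor $q^i$ counts the choices for the monic polynomial $d$.

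Next, I would encode the identity as a bivariate generating--function equality. Setting $F(x,y)\eqdef \sum_{s,u\geq 0} P(s,u)\,x^s y^u$, the recursion reads
\[
\frac{1}{(1-qx)(1-qy)} \;=\; \frac{F(x,y)}{1 - qxy},
\]
so that $F(x,y)=\frac{1-qxy}{(1-qx)(1-qy)}$. Expanding this rational function and reading off the coefficient of $x^s y^u$ for $s,u\geq 1$ gives $P(s,u)=q^{s+u}-q^{s+u-1}=q^{s+u-1}(q-1)$. Dividing by the total number $q^{s+u}$ of monic pairs of those degrees yields the claimed probability $1-\tfrac{1}{q}$.

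The remaining point is that the statement is phrased for \emph{all} polynomials of degrees $s,u$ rather than monic ones. This is harmless: the map $(f,g)\mapsto (\lambda^{-1}f,\mu^{-1}g)$, where $\lambda,\mu\in \Fq^\times$ are the leading coefficients, gives a bijection between arbitrary pairs of degrees $(s,u)$ and tuples $(\lambda,\mu,f_0,g_0)\in \Fq^\times\times \Fq^\times\times\{\text{monic degree }s\}\times\{\text{monic degree }u\}$; coprimality depends only on $(f_0,g_0)$, so the probability is the same. The main subtlety I expect is the degenerate boundary case where $s=0$ or $u=0$: if, say, $u=0$ then $g$ is a nonzero constant, hence coprime to every $f$, giving probability $1$ rather than $1-1/q$. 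Thus the statement as used in this section is implicitly restricted to $\min(s,u)\geq 1$, which is precisely the regime in which it will be applied to estimate $|\mathcal{G}|$ and $|\mathcal{G}_\E|$.
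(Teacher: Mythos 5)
Your proof is correct, and it is worth noting that the paper itself offers no proof of this statement: it is imported verbatim as an external citation to Benjamin--Bennett, whose published argument is a bijective one (a correspondence built from the Euclidean algorithm between non-coprime pairs and a $1/q$-fraction of all pairs). Your route is different and makes the result self-contained: the exact factorisation $(f,g)=(df',dg')$ with $d=\gcd(f,g)$ gives the convolution identity $q^{s+u}=\sum_{i}q^{i}P(s-i,u-i)$, which you invert with the bivariate generating function $F(x,y)=\frac{1-qxy}{(1-qx)(1-qy)}$ to read off $P(s,u)=q^{s+u-1}(q-1)$ for $s,u\geq 1$; the reduction from arbitrary leading coefficients to monic pairs is handled correctly. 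Pleasingly, your fundamental identity is exactly the decomposition the paper itself re-derives downstream in its counting lemma (the step "pairs with $\gcd=h$ correspond to coprime pairs of degrees $(s-i,u-i)$"), so your argument and the paper's later use of the theorem rest on the same combinatorial fact. Your flag on the boundary case is also accurate and not pedantic: when $\min\{s,u\}=0$ one of the polynomials is a nonzero constant and the coprimality probability is $1$, not $1-\frac{1}{q}$, so the clause "not both zero" in the quoted statement is too weak; this is harmless here because the estimates that invoke the theorem (in particular the main density lemma, where $s+u\geq 2k-10$ and $k\geq 17$ force $s,u\geq 1$) only ever use it with both degrees positive.
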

We start counting the number of polynomials in $\Fqx$ of fixed degrees.
\begin{lemma}\label{Counting lemma}\label{LEMMA}
    Let $s,u,i$ 
    be nonnegative integers such that $s,u < k$. Then $|\mathcal{A}(s,u )|= (q-1)^2 q^{s+u}$ and moreover, among the polynomials in $\mathcal{A}(s,u)$, 
    \begin{enumerate}
        \item $(q-1)^3 q^{s+u-1} $ many of them are coprime, i.e. $|\mathcal{B}(s,u,0)|= (q-1)^3 q^{s+u-1}$;
        \item $(q-1)^2 q^{s+u-1} $ many of them are not coprime;
        \item For $  i \le \min\{s,u\}$, $|\mathcal{B}(s,u,i)|=(q-1)^3 q^{s+u-i-1} $.
    \end{enumerate}
\end{lemma}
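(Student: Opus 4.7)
The plan is to reduce everything to a direct application of Theorem~\ref{gcd theorem} after a standard parameterization of polynomial pairs by their monic $\gcd$. First, $|\mathcal{A}(s,u)|$ is immediate: a polynomial of degree exactly $d$ in $\Fqx$ is determined by a nonzero leading coefficient and $d$ free lower-order coefficients, giving $(q-1)q^d$ choices and hence $|\mathcal{A}(s,u)| = (q-1)^2 q^{s+u}$. For item (1), I would invoke Theorem~\ref{gcd theorem} directly: a uniformly random pair in $\mathcal{A}(s,u)$ is coprime with probability $1 - 1/q$, so $|\mathcal{B}(s,u,0)| = (1-1/q)\cdot (q-1)^2 q^{s+u} = (q-1)^3 q^{s+u-1}$. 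Item (2) then follows by subtraction, since $|\mathcal{A}(s,u)| - |\mathcal{B}(s,u,0)| = (q-1)^2 q^{s+u-1}\bigl(q - (q-1)\bigr) = (q-1)^2 q^{s+u-1}$.

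For item (3), the key step is a bijective parameterization. Every pair $(f,g) \in \mathcal{B}(s,u,i)$ factors uniquely as $f = c_f \cdot h \cdot \tilde f$ and $g = c_g \cdot h \cdot \tilde g$, where $c_f, c_g \in \Fq^\times$ are the leading coefficients of $f$ and $g$, $h$ is the monic $\gcd$ of $f$ and $g$ (so $\deg h = i$, contributing $q^i$ choices), and $\tilde f, \tilde g$ are monic coprime cofactors of degrees $s-i$ and $u-i$. Since coprimality is invariant under scaling by units, Theorem~\ref{gcd theorem} restricted to monic pairs still yields the proportion $1-1/q$ of coprime pairs, so the number of such monic coprime cofactor pairs equals $(1-1/q)\, q^{s-i} q^{u-i} = (q-1)\, q^{s+u-2i-1}$. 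Multiplying the three independent factors gives
\[
    |\mathcal{B}(s,u,i)| = (q-1)^2 \cdot q^i \cdot (q-1)\, q^{s+u-2i-1} = (q-1)^3 q^{s+u-i-1},
\]
as claimed.

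No step is genuinely hard; the only point requiring a little care is justifying the use of Theorem~\ref{gcd theorem} in the monic setting, which follows by decomposing the set of coprime pairs in $\mathcal{A}(s-i, u-i)$ as $\Fq^\times \times \Fq^\times$ times the set of monic coprime pairs, so dividing the overall count by $(q-1)^2$ preserves the $1-1/q$ proportion. One should also be mindful of the boundary case $i = \min\{s,u\}$ where a cofactor has degree $0$: there $\tilde f$ or $\tilde g$ equals $1$, automatically coprime to its partner, and the formula remains valid as long as at least one of $s-i, u-i$ is positive (the fully degenerate subcase $s=u=i$ would give $|\mathcal{B}(s,s,s)| = (q-1)^2 q^s$, and is not needed in the sequel since the dimension argument in Remark~\ref{reduction} only uses pairs with strict degree drop).
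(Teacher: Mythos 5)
Your proof is correct and follows essentially the same route as the paper: items (1) and (2) via Theorem~\ref{gcd theorem} and subtraction, and item (3) by fixing the monic gcd $h$ (with $q^i$ choices) and reducing to the coprime count for the cofactor pair of degrees $s-i$ and $u-i$. Your extra care about the degenerate case $s=u=i$ (where the stated formula actually fails and Theorem~\ref{gcd theorem} does not apply) is a valid observation that the paper's proof silently glosses over, but it is harmless for the application.
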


\begin{proof}
    The first two items follow directly from Theorem \ref{gcd theorem}.
    The number of monic polynomials $h$ of degree $i$ is $q^i$. Observe that the number of polynomial pairs $(f,g)$ with $ \gcd (f,g) =h$ is equal to the number of polynomial pairs $(f^{\prime} , g^{\prime }) $ with $\deg f^{\prime} = s-i$, $ \deg g^{\prime } =u-i$ and $\deg (\gcd(f^{\prime},g^{\prime}))=0 $. By the first item, there are $ (q-1)^3 q^{s+u-2i-1} $ such pairs. Since it is the same number for any $ \gcd(f,g) =h$, and we have $ q^i$ many choices for $h$, there are $ (q-1)^3 q^{s+u-2i-1} q^i =(q-1)^3 q^{s+u-i-1} $ many polynomial pairs with $\deg ( \gcd(f,g) ) =i $ where $i \le \min(s,u) $.
\end{proof}

To count the sets of polynomials $\mathcal{G}$ and $\mathcal{G}_{\E}$, we first count the number of polynomial pairs in $\mathcal{A}(s,u )$ having gcd less than or equal to $j$ where $j \le \min\{s,u\}$. 

\begin{lemma}\label{small gcd probability lemma}
For any positive integers $s,u$ and $ j \leqslant s,u< k$, we have
$$
\frac{  |   \mathcal{B}_j(s,u) |}{| \mathcal{A}(s,u) |} =  1- \frac{1}{q^{j+1}}.$$
\end{lemma}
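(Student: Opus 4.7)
The plan is to compute $|\mathcal{B}_j(s,u)|$ exactly by expressing it as a disjoint union of the sets $\mathcal{B}(s,u,i)$ for $0 \le i \le j$, and then to divide by $|\mathcal{A}(s,u)|$. Both quantities have already been computed in Lemma~\ref{Counting lemma}, so this reduces to summing a geometric series.

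First, I would observe that, by definition,
\[
    \mathcal{B}_j(s,u) = \bigsqcup_{i=0}^{j} \mathcal{B}(s,u,i),
\]
and since we assume $j \leq s, u$, we have $i \leq \min\{s,u\}$ for every $i$ in the range, so Lemma~\ref{Counting lemma}(3) applies to each term. Hence
\[
    |\mathcal{B}_j(s,u)| = \sum_{i=0}^{j} (q-1)^3 q^{s+u-i-1} = (q-1)^3 q^{s+u-1} \sum_{i=0}^{j} q^{-i}.
\]
The finite geometric sum evaluates to $\frac{1 - q^{-(j+1)}}{1 - q^{-1}} = \frac{q}{q-1}\bigl(1 - q^{-(j+1)}\bigr)$, which when inserted gives
\[
    |\mathcal{B}_j(s,u)| = (q-1)^2 q^{s+u}\left(1 - \frac{1}{q^{j+1}}\right).
\]

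Finally, I would divide by $|\mathcal{A}(s,u)| = (q-1)^2 q^{s+u}$ from Lemma~\ref{Counting lemma} to obtain the claimed ratio $1 - q^{-(j+1)}$. There is no real obstacle here; the only point deserving a line of justification is that the union defining $\mathcal{B}_j(s,u)$ is truly disjoint (which is immediate from the fact that the gcd has a well-defined degree) and that the hypothesis $j \leq s,u$ ensures Lemma~\ref{Counting lemma}(3) is applicable for every index in the sum.
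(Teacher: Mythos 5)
Your proof is correct and follows exactly the paper's argument: decompose $\mathcal{B}_j(s,u)$ as the disjoint union of the $\mathcal{B}(s,u,i)$ for $i \le j$, apply Lemma~\ref{Counting lemma}(3) to each term, sum the geometric series, and divide by $|\mathcal{A}(s,u)|$. No differences worth noting.
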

\begin{proof} Note that $ \mathcal{B}(s,u,i ) $'s are non intersecting for distinct $i$ and thus $ \mathcal{B}_j(s,u) = \bigsqcup\limits_{i \leqslant j } \mathcal{B}(s,u,i )$. 
    Following Lemma \ref{LEMMA}, we have
    $ | \mathcal{B}(s,u,i )|  = (q-1)^3 q^{s+u-i -1} .$
    Hence,
    \begin{align*} \sum_{i\leqslant j } | \mathcal{B}(s,u,i )|  &= \sum_{ i\leqslant j } (q-1)^3 q^{s+u-i -1} \\& = (q-1)^3 q^{s+u-j-1} \left( 1+ \cdots+ q^{j} \right)   \\
    &= (q-1)^3  q^{s+u-j-1} \frac{( q^{j+1} -1 ) }{(q-1) }\\
    &= (q-1)^2   q^{s+u-j-1} ( q^{j+1} -1 ).
    \end{align*}
    Note that $| \mathcal{A}(s,u) | = (q-1)^2 q^{s+u}$, and therefore,
    \begin{align*}   
    \frac{  |  \mathcal{B}_j(s,u )  |}{| \mathcal{A}(s,u) |} =\frac{(q-1)^2   q^{s+u-j-1} ( q^{j+1} -1 ) }{(q-1)^2 q^{s+u }} = \frac{ q^{s+u }- q^{s+u-j-1} }{  q^{s+u}} =1-\frac{1}{q^{j+1}}. 
    \end{align*}
\end{proof}

Now, we are ready to count the polynomials in $\mathcal{G}$.
The lemma to follow yields a density estimate for $\mathcal G$ for $k,t$ large enough.
The lower bound on $k,t$ are chosen to further obtain a reasonable estimate for the density of $\mathcal{E}'$.

\begin{lemma}\label{main lemma}
For integers $k,t \ge 17$, we have
\begin{align*}
     \frac{| \mathcal{G} |}{|\Mon k \times \Mon k |}  \geqslant 1- \frac{1}{q^7}\cdot
\end{align*}
\end{lemma}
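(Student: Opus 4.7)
The plan is to bound the complement $\overline{\mathcal{G}} \eqdef (\Mon k \times \Mon k) \setminus \mathcal{G}$. Since $|\Mon k \times \Mon k| = q^{2k}$, the target inequality $|\mathcal{G}|/q^{2k} \geq 1 - q^{-7}$ is equivalent to $|\overline{\mathcal{G}}| \leq q^{2k-7}$. I would first unify the two cases of Definition~\ref{def:gcd_cond} by introducing $\lambda \eqdef 5$ if $t > k$ and $\lambda \eqdef (k-t)+5$ if $t \leq k$, so that a nonzero pair $(f,g)$ with $m \eqdef \max(\deg f, \deg g)$ lies in $\overline{\mathcal{G}}$ precisely when $\deg(\gcd(f,g)) \geq m - \lambda$. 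The hypotheses $k, t \geq 17$ translate into $\lambda \leq k - 12$ in both cases.

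Next I would stratify $\overline{\mathcal{G}}$ into three disjoint parts: (i) pairs involving the zero polynomial, at most $2q^k$ of them; (ii) nonzero pairs with $m \leq \lambda$, which are all automatically bad and bounded above by $q^{2\lambda+2}$ through the trivial embedding into $\Mon{\lambda+1} \times \Mon{\lambda+1}$; (iii) nonzero pairs with $m > \lambda$ whose gcd has degree $\geq m - \lambda$. For piece (iii), the condition $\deg(\gcd) \geq m - \lambda > 0$ forces $\min(\deg f, \deg g) \geq m - \lambda$, after which Lemma~\ref{small gcd probability lemma} applied with $j = m - \lambda - 1$ gives a fraction of bad pairs inside $\mathcal{A}(s, u)$ equal to exactly $q^{-(m-\lambda)}$.

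Summing $(q-1)^2 q^{s+u-(m-\lambda)}$ over $m \in \{\lambda+1, \ldots, k-1\}$ and over $d \eqdef m - \min(s,u) \in \{0, \ldots, \lambda\}$ (with multiplicity one for $d = 0$ and two otherwise) yields the geometric estimate
\[
    (q-1)^2 \sum_{m=\lambda+1}^{k-1} q^{m+\lambda} \left(1 + 2 \sum_{d=1}^{\lambda} q^{-d}\right) \leq 3(q-1) q^{k+\lambda},
\]
valid for every $q \geq 2$ since the inner parenthesis is bounded by $1 + 2/(q-1) \leq 3$. Collecting the three contributions,
\[
    \frac{|\overline{\mathcal{G}}|}{q^{2k}} \leq 2q^{-k} + q^{2\lambda+2-2k} + 3(q-1) q^{\lambda - k},
\]
which for $k, t \geq 17$ (hence $\lambda \leq k-12$) is at most $2q^{-17} + q^{-22} + 3q^{-11}$. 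A short numeric check then confirms this quantity is below $q^{-7}$ for every prime power $q \geq 2$, the dominant term being $3q^{-11}$.

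The main obstacle is arranging the stratification of piece (iii) so as to extract a clean geometric series of common ratio $q^{-1}$ rather than a crude product bound; the hypothesis $k, t \geq 17$ is exactly what is needed to force the leading exponent $q^{\lambda - k}$ to be no worse than $q^{-12}$, leaving the slack required to comfortably dominate the $q^{-7}$ target even when $q = 2$.
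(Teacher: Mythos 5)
Your proof is correct, and it reaches the bound by a genuinely different route than the paper. The paper lower-bounds $|\mathcal{G}|$ directly: it writes the density as $\sum_{s,u<k}\frac{|\mathcal{A}(s,u)|}{|\Mon k|^2}\cdot\frac{|\mathcal{B}_{j_{s,u}}(s,u)|}{|\mathcal{A}(s,u)|}$, keeps only the strata with $s+u\geq 2k-10$ (on which each conditional density is at least $1-q^{-8}$ by Lemma~\ref{small gcd probability lemma}), and identifies the surviving weights $\sum_{r'=0}^{8}(r'+1)q^{-r'}$ with a truncated Taylor expansion of $(1-x)^{-2}$ at $x=1/q$; this forces a remainder estimate and the side assumption $q\geq 11$. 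You instead upper-bound the complement, stratifying by the maximum degree $m$ and the degree gap $d$, so that the same key ingredient (Lemma~\ref{small gcd probability lemma} with $j=m-\lambda-1$) produces a clean double geometric series and the explicit estimate $2q^{-17}+q^{-22}+3q^{-11}\leq q^{-7}$, valid for every $q\geq 2$, with the zero-polynomial and low-degree pairs accounted for explicitly rather than silently left among the discarded strata. Your version buys uniformity in $q$ and a fully transparent census of the bad pairs; the paper's buys a quicker identification of the dominant contribution and a bound that visibly sharpens when more strata are retained. Two small points to polish: Lemma~\ref{small gcd probability lemma} is stated for positive $j$, so at the boundary stratum $m=\lambda+1$ (where $j=0$) you should invoke Lemma~\ref{LEMMA} directly, which gives the same ratio $1-q^{-1}$; and it is worth stating that pairs containing the zero polynomial are placed in the complement by convention, since the gcd condition is only defined for pairs with well-defined degrees.
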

\begin{proof}
Note that
$  \mathcal{G}  =\bigcup_{s,u<k }  \mathcal{B}_j (s,u)$. Let $j_{s,u}$ be the greatest possible degree for $\gcd(f,g)$ such that $(f,g)$ satisfies the gcd condition where $\deg f =s$ and $\deg g = u$. Using this, we get
\begin{equation} \label{main lemma ineq 1}
    \frac{| \mathcal{G} |}{|\Mon k \times \Mon k |} = \sum_{s,u <k} \frac{| \mathcal{A} (s,u) |}{{|\Mon k |}^2 } \frac{ |\mathcal{B}_{j_{s,u}}(s,u) |}{ |\mathcal{A} (s,u)| }
\end{equation}
Observe that if we consider only the pairs $(s,u)$ such that $s+u=r \in \{ 2k-10,\ldots,2k-2\}$, we get the following lower bound of the sum in \eqref{main lemma ineq 1},
\begin{align}
    \sum_{s,u} \frac{| \mathcal{A} (s,u) |}{|\Mon k|^2 } \frac{ |\mathcal{B}_{j_{s,u}}(s,u) |}{ |\mathcal{A} (s,u)| } \geqslant
    \sum_{\substack{r  \\r \geqslant 2k-10 }} 
    \sum_{\substack{s,u\\ s+u=r} } \frac{| \mathcal{A} (s,u) |}{|\Mon k|^2 } \frac{ |\mathcal{B}_{j_{s,u}}(s,u) |}{ |\mathcal{A} (s,u)| }.\label{main lemma ineq 2} 
\end{align}
We replace $r $ by $r':=(2k-2)-r$ in the limits of the summation in the above equation \eqref{main lemma ineq 2}.
Note that for a fixed $r'$, there are exactly $( r' +1)$-choices for the pairs $ (s,u)$ such that $s+u=r$, and following the value of $| \mathcal{A} (s,u) |$ in Lemma \ref{LEMMA}, we have 
\begin{align}
    \frac{| \mathcal{A} (s,u) |}{|\Mon k|^2 } =\left( \frac{q-1}{q}\right)^2 \frac{1}{q^{r'}}. \label{main lemma ineq 3} 
\end{align} 
For $ r' =(2k-2)-r \leqslant 8$, and $s+u = r $, $\max\{s,u\} \geqslant k-5$.  For any pair $(s,u)$  such that $(2k-2)-r'= r = s+u \geqslant 2k-10$, this implies 
\[
\begin{cases}
&k-10 \leq j_{s,u} <    k-6 \quad  \text{ if } t > k,\\
& t-10 \leq j_{s,u} <   t-6 \quad \,\, \text{ if } t \le k.
\end{cases}
\]
Since  $k ,t  \ge 17$, then $j_{s,u} +1 \ge 8$ in both cases. Therefore, following Lemma \ref{small gcd probability lemma}, we get 
\begin{align}
     \frac{ |\mathcal{B}_{j_{s,u}}(s,u) |}{ |\mathcal{A} (s,u)| } = 1-\frac{1}{q^{j_{s,u}+1 }} \geqslant 1-\frac{1}{q^{8}}\label{main lemma ineq 4}.
\end{align}
Combining \eqref{main lemma ineq 1}, \eqref{main lemma ineq 2}, \eqref{main lemma ineq 3} and \eqref{main lemma ineq 4} all together, we get
\begin{align*}
    \frac{| \mathcal{G} |}{|{\Mon k}|^2} \geqslant  \left( \frac{q-1}{q}\right)^2 \left(   1-\frac{1}{q^{8}} \right) \sum_{r' =0}^{8}   (r'+1) \frac{1}{q^{r' }}  .
\end{align*}
The leftmost term is the truncated Taylor expansion of the function
 $F = (x-1)^{-2}$ evaluated at $1/q$. Let us write this the Taylor expansion as
$$  (x-1)^{-2}  =1+2x+3x^2 +\cdots + 8x^7 + 9x^8+R_8(x).  $$
Moreover, for any $x\in (0,1/5)$, we have 
\[
R_8(x)=\frac{F^{(9)}(z)x^{9}}{9!}
\]
 for some
$z \in (0, 1/5) $ and one can prove that for any such $x$, we have $  |R_8(x)| < 121 x^9.$
Taking $x=1/q$ (since we are dealing with GRS codes, one can reasonably assume that $q\geq 11$), we finally get
\begin{align*}
    \frac{| \mathcal{G} |}{|\Mon k|^2} &\geqslant \left( \frac{q-1}{q}\right)^2 \left(   1-\frac{1}{q^{8}} \right) 
    \left( \left(  \frac{q-1}{q}\right)^{-2} + O\left( \frac{1}{q^9}\right) \right)
   \geqslant 1-\frac{1}{q^7}.
\end{align*}

\end{proof}
\begin{remark}
    Note that in Lemma \ref{main lemma}, the lower bound $1-\frac{1}{q^7}$ can be improved for larger $ t,k$ such that $k,t\gg 17 $. As we can condider more $r$'s in  \eqref{main lemma ineq 2} for those cases, the error term in the Taylor expansion gets smaller.
    \end{remark}
The following Lemma will be useful to count polynomials in $\mathcal{G}_{\E}$.
\begin{lemma}\label{main lemma 2}
We have that, 
$$ \frac{| \mathcal{G}|}{|\Mon k \times \Mon k |} \geqslant 1- \frac{1}{q^w} \hspace{0.2cm} \Longrightarrow \hspace{0.2cm} \frac{| \mathcal{G}_{\E}|}{|\E\times \E|} \geqslant 1- \frac{1}{q^{w-2}}.$$
for some $w\in \Z_{>0}$.
\end{lemma}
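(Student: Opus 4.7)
The plan is to observe that the statement is essentially a cardinality-comparison argument: we just need to bound the size of the complement $\mathcal{G}_{\E}^c$ inside $\E \times \E$ by the size of the complement $\mathcal{G}^c$ inside $\Mon{k} \times \Mon{k}$ and then renormalize.

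First I would record the obvious size relations. Since $\E$ has codimension $1$ in $\Fqxk$, we have $|\E| = q^{k-1}$, hence
\[
|\Mon{k} \times \Mon{k}| = q^{2k} = q^{2}\cdot|\E \times \E|.
\]
Next, I would use the crucial set-theoretic inclusion
\[
\mathcal{G}_{\E}^c \eqdef (\E \times \E)\setminus \mathcal{G}_{\E} = (\E \times \E) \cap \bigl((\Mon{k}\times \Mon{k})\setminus \mathcal{G}\bigr) \subseteq (\Mon{k}\times\Mon{k})\setminus \mathcal{G},
\]
which comes directly from the definition $\mathcal{G}_{\E} = \mathcal{G} \cap (\E \times \E)$.

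Then I would translate the density hypothesis into a bound on the complement:
\[
\bigl|(\Mon{k}\times \Mon{k})\setminus \mathcal{G}\bigr| \leq \frac{1}{q^{w}}\,|\Mon{k}\times\Mon{k}| = \frac{q^{2k}}{q^{w}}.
\]
Combining with the inclusion above yields $|\mathcal{G}_{\E}^c|\leq q^{2k-w}$. Dividing by $|\E\times \E|=q^{2k-2}$ gives
\[
\frac{|\mathcal{G}_{\E}^c|}{|\E\times \E|}\leq \frac{q^{2k-w}}{q^{2k-2}} = \frac{1}{q^{w-2}},
\]
and taking the complementary probability delivers the required inequality $|\mathcal{G}_{\E}|/|\E\times\E|\geq 1-q^{-(w-2)}$.

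There is no real obstacle here; the argument is a one-line inclusion plus a counting normalization, and the factor $q^2$ loss in the exponent is precisely the codimension $2$ of $\E \times \E$ inside $\Mon{k}\times \Mon{k}$. The only mild subtlety worth mentioning is that one must not confuse $\mathcal{G}_{\E}$ with the intersection of $\mathcal{G}$ with some projection: the definition is stated directly as $\mathcal{G} \cap (\E\times \E)$, so the inclusion of complements works immediately without any further analysis of the gcd condition on $\E$.
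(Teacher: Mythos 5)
Your proof is correct and amounts to the same counting argument as the paper's: the paper phrases it via inclusion–exclusion on $|(\E\times\E)\cup\mathcal{G}|\leq|\Mon{k}\times\Mon{k}|=q^2|\E\times\E|$, which is just a rearrangement of your complement inclusion $(\E\times\E)\setminus\mathcal{G}_{\E}\subseteq(\Mon{k}\times\Mon{k})\setminus\mathcal{G}$. Both hinge on $\mathcal{G}_{\E}=\mathcal{G}\cap(\E\times\E)$ and the index-$q^2$ size relation, so there is nothing to add.
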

\begin{proof}
Note that,
\begin{align*} 
    | (\E\times \E ) \cup \mathcal{G} |&=   | \E\times \E |+|\mathcal{G} | -| \mathcal{G} \cap (\E\times \E )  |\\
    & \leqslant |\Mon k   \times \Mon k| = q^2 |\E\times \E|. 
\end{align*}
Now, since $\mathcal{G}_{\E} = \mathcal{G} \cap (\E \times \E)$, from the previous inequality we can deduce that,
$$  |\mathcal{G}_{\E}|\geqslant (1-q^2) |\E\times \E| +|\mathcal{G}|. $$
From the hypothesis of the lemma, 
$${|\mathcal{G}|  } \geqslant q^2 {|\E\times \E|} \left(1- \frac{1}{q^w}\right), $$
and so, putting all together, we can conclude the desired result,
\begin{align*}  
    |\mathcal{G}_{\E}|&\geqslant \left((1-q^2)   +\left(q^2- \frac{1}{q^{  w-2}}\right)   \right) |\E\times \E|
    = \left(1- \frac{1}{q^{w-2}}\right) |\E\times \E|.
\end{align*}
\end{proof}

\begin{figure}[H]
\centering
\includegraphics{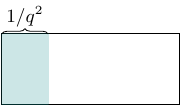}
\quad
\includegraphics{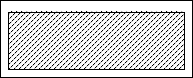}
\includegraphics{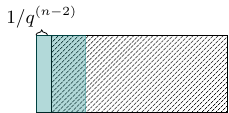}
\caption{The first image illustrates the ratio of $\E\times\E$ in $\Mon k \times \Mon k$, the second one is for the ratio of $ \mathcal{G}$ in $\Mon k \times \Mon k$ and the third one represents the \textit{worst case} intersection of the first two. }
\end{figure}

\begin{proof}[Proof of Theorem \ref{main thm*}]
We made the observation that the inequality
\begin{align}
    \dim  \< f_1, f_2, f_3\> \E >  2k+2 \label{main thm ineq1 } 
\end{align}  
holds true if $(f_2,f_3)$ satisfies the \emph{gcd condition} as given in \eqref{gcd_cond}.
Using Lemma \ref{main lemma}, and then taking $w=7$ in Lemma \ref{main lemma 2}, this probability is $\frac{|\mathcal{G}_{\E}|}{|\E \times \E|} \geqslant 1-\frac{1}{q^5}$. Following Remark \ref{reduction}, it means that $\frac{|\Psi|}{|\Psi \sqcup \Gamma|} \ge 1 - \frac{1}{q^5}$. Thus, if a triple $(f_1,f_2,f_3) \in \Psi \cup \Gamma$, then $(f_1,f_2,f_3) \in \Gamma$ with probability $\leqslant \frac{1}{q^5}$. For a triple $ (f_1,f_2,f_3) \in \R^3$, the probability of $(f_1,f_2,f_3) \in \Psi \cup \Gamma$ is $ 1-\frac{1}{q^3}$. Hence we get:
$$ \frac{|\E^3|}{|\E^3\cup \Gamma|} \geq \frac{\frac{1}{q^3}}{ \left( 1-\frac{1}{q^3} \right) \frac{1}{q^5} + \frac{1}{q^3}} \ge 1-\frac{1}{q^2}.$$  This, together with the subsequent statement complete the proof of the theorem.  
\end{proof}
\begin{corollary}\label{algorithm.polynomial prop}
    Let $f_1,f_2,f_3\in \R$ are three polynomials chosen uniformly at random. If $f_1,f_2\in \E$ and $ (f_1,f_2,f_3) \in \E^3\cup \Gamma $, then probability of $(f_1,f_2,f_3) \in \E^3$ is $\geqslant 1-\frac{1}{q^2}$.
  \end{corollary}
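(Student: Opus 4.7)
The plan is to reduce this conditional statement to the gcd analysis already performed in the proof of Theorem \ref{main thm*}, by simply relabeling which of the three polynomials plays the role of the ``distinguished'' one carrying a nonzero $c(x)$-component. Under the hypothesis $f_1, f_2 \in \E$, the event $(f_1,f_2,f_3) \in \E^3$ is equivalent to $f_3 \in \E$; since $f_3$ is uniform on $\R$ and $\E$ has codimension one in $\R$, this occurs with probability exactly $1/q$. The entire content of the statement is therefore to control the residual probability that $f_3 \in \R \setminus \E$ yet the triple still lands in $\Gamma$.

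For the analysis of that residual event, I would write $f_3 = e + \lambda c(x)$ with $\lambda \neq 0$ and $e \in \E$. The reduction that opens the proof of Theorem \ref{main thm*}, which after taking linear combinations places the $c(x)$-component inside a single polynomial and leaves the other two in $\Mon k$, applies verbatim with $f_3$ now playing the role previously assigned to $f_1$, and with $(f_1,f_2) \in \E \times \E$ playing the role of the pair whose gcd controls $\dim(\Span{f_2,f_3}\E)$ in Equation \eqref{combined}. Hence $(f_1,f_2,f_3) \in \Gamma$ forces $(f_1,f_2)$ to violate the gcd condition of Definition \ref{def:gcd_cond}.

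To estimate that, I would invoke Lemma \ref{main lemma} and then Lemma \ref{main lemma 2} with $w=7$, concluding that a uniformly random pair in $\E \times \E$ satisfies the gcd condition with probability at least $1 - 1/q^5$. This yields
\[
    \Pr\bigl[(f_1,f_2,f_3) \in \Gamma \,\bigm|\, f_1,f_2 \in \E,\ f_3 \notin \E\bigr] \;\leqslant\; \frac{1}{q^5}.
\]
A Bayes-type combination then gives
\[
    \Pr\bigl[(f_1,f_2,f_3) \in \E^3 \,\bigm|\, f_1,f_2 \in \E,\ (f_1,f_2,f_3) \in \E^3 \cup \Gamma\bigr] \;\geqslant\; \frac{\tfrac{1}{q}}{\tfrac{1}{q} + \left(1-\tfrac{1}{q}\right)\tfrac{1}{q^5}},
\]
which a routine manipulation shows is at least $1 - 1/q^2$ (indeed, even $1 - 1/q^4$).

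The only subtle point is justifying that the reduction and the gcd-condition bookkeeping from the proof of Theorem \ref{main thm*} survive the relabeling of $f_1$ and $f_3$; since the construction of $\Gamma$ is symmetric in the three polynomials up to linear combinations within $\R$, this is essentially a bookkeeping check rather than a genuine obstacle. No new estimate beyond Lemmas \ref{main lemma} and \ref{main lemma 2} is needed.
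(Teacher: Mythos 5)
Your proof is correct, and it reaches the conclusion by a genuinely different (and slightly sharper) assembly of the same ingredients. The paper's own proof is a one-line monotonicity argument: it defines $\Gamma_0 \eqdef \{(f_1,f_2,f_3)\in\Gamma : f_1,f_2\in\E\}$, observes that the conditional probability in question equals $|\E^3|/|\E^3\sqcup\Gamma_0|$, and bounds this below by $|\E^3|/|\E^3\cup\Gamma|\geq 1-\frac{1}{q^2}$, the ratio already established at the end of the proof of Theorem~\ref{main thm*}. You instead exploit the conditioning on $f_1,f_2\in\E$ directly: you split on whether $f_3\in\E$ (probability exactly $\frac 1q$), rerun the reduction of Theorem~\ref{main thm*} with $f_3$ as the polynomial carrying the $c(x)$-component, and invoke the relabeled form of Remark~\ref{reduction} together with Lemmas~\ref{main lemma} and~\ref{main lemma 2} to bound the residual event by $\frac{1}{q^5}$. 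The relabeling is legitimate precisely for the reason you give: $\Psi$ and $\Gamma$ are defined through $\dim\Span{f_1,f_2,f_3}\R$, which depends only on the span and is therefore symmetric in the three polynomials. What your route buys is a stronger bound, $1-\frac{1}{q^4}$ rather than $1-\frac{1}{q^2}$, and a proof that does not lean on the final displayed inequality of the theorem's proof; what the paper's route buys is brevity, since once $|\E^3|/|\E^3\cup\Gamma|\geq 1-\frac{1}{q^2}$ is in hand, the corollary follows from $\Gamma_0\subseteq\Gamma$ alone.
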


\begin{proof}
    Defining 
    $$\Gamma_0:= \{ (f_1,f_2,f_3) \in \Gamma : \dim\< f_1, f_2, f_3\> \R\leqslant 2k+2  \text{ and } f_1,f_2\in \E \}, $$
    and using $ \Gamma_0\subseteq \Gamma$ the conditional probability described in the proposition can be lower bounded by
    $$ \frac{|\E|}{ |\E \cup \Gamma_0|} \geqslant  \frac{|\E|}{ |\E \cup \Gamma|} \geqslant 1-\frac{1}{q^2}.$$
\end{proof}

\begin{remark}\label{recoveringE}
    After iterating the process of appending a polynomial to primarily found $(f_1, f_2, f_3) \in \E^3 \sqcup \Gamma$ $k-3$ times, we get a basis of the codimension 1 subspace $\E$ with probability $(1 - \frac{(k-3)}{q^2})$. 
\end{remark}
\section{Conclusion}
We showed that qGRS codes are easily distinguishable from random codes using
classical cryptanalysis techniques on algebraic codes. Moreover, 
we proposed a complete key-recovery attack on TGRS and actually qGRS
codes with $\ell = 1$. We left some questions open. First,
our proof of the validity of the attack rests on a probability analysis which 
does not permit us to treat some cases where the attack still probably works.
Second, for some very specific cases such as $h = 1$ or $k-2$, despite
being able to distinguish the codes, we cannot conclude the attack. We hope such cases
to be attacked but this would require an ad hoc manner to finish the attack.
Finally the major remaining question is the case of $\ell > 1$ twists. It is clear that a too high number of twist is not practical for cryptography since the decryption
complexity is exponential in the number of twists. Still, the question of small
$\ell \geq 2$ remains open, since the codes are distinguishable from random ones,
one can be optimistic on the possibility to extend the attack to larger $\ell$'s but this probably requires many technical adaptations both in the attack itself and in the probability analysis to prove the validity of the attack.

\bibliographystyle{splncs04}
\bibliography{Reference}

\end{document}